\newcounter{MYtempeqncnt}
\newcommand{\Rmnum}[1]{\expandafter\@slowromancap\romannumeral #1@}
\definecolor{tomato}{RGB}{255,99,71}
\definecolor{db}{RGB}{30,144,255} 
\newtheorem{Theorem}{Theorem}
\algnewcommand{\Initialize}[1]{%
	\State \textbf{initialize:}
	\Statex \hspace*{\algorithmicindent}\parbox[t]{.8\linewidth}{\raggedright #1}
}
\begin{document}
%
\title{Joint Head Selection and Airtime Allocation for Data Dissemination in Mobile Social Networks}
%
%
%
%

\author{Zhifei Mao,
        Yuming Jiang,
        Xiaoqiang Di,
        and Yordanos Woldeyohannes
\IEEEcompsocitemizethanks{\IEEEcompsocthanksitem Z. Mao, Y. Jiang, and Y. Woldeyohannes are with the Department of Information Security and Communication Technology, NTNU, Norwegian University of Science and Technology, Trondheim N-7491, Norway.\protect\\
E-mail: \{zhifeim,jiang,yordanow\}@item.ntnu.no
\IEEEcompsocthanksitem X. Di is with the Department of Network Engineering, Changchun University of Science and Technology, Changchun 7089, China.\protect\\
E-mail: dixiaoqiang@cust.edu.cn}
}

\IEEEtitleabstractindextext{%
\begin{abstract}
By forming a temporary group, users in mobile social networks (MSNs) can disseminate data to others in proximity with short-range communication technologies. However, due to user mobility, airtime available for users in the same group to disseminate data is limited. In addition, for practical consideration, a star network topology among users in the group is expected. For the former, unfair airtime allocation among the users will undermine their willingness to participate in MSNs. For the latter, a group head is required to connect other users. These two problems have to be properly addressed to enable real implementation and adoption of MSNs. To this aim, we propose a joint head selection and airtime allocation scheme for data dissemination within the group using Nash bargaining theory. Specifically, we consider two cases in terms of user preference on the data to be disseminated: a homogeneous case and a heterogeneous case. For each case, a Nash bargaining solution (NBS) based optimization problem is proposed. The existence of optimal solutions to the optimization problems is proved, which guarantees Pareto optimality and proportional fairness. Next, an algorithm that allows distributed implementation is introduced. Finally, numerical results are presented to evaluate the performance, validate intuitions and derive insights of the proposed scheme. 


\end{abstract}

\begin{IEEEkeywords}
Mobile social networks, data dissemination, airtime allocation, fairness, game theory, Nash bargaining.
\end{IEEEkeywords}}

\maketitle

\IEEEdisplaynontitleabstractindextext

%
\IEEEpeerreviewmaketitle

\IEEEraisesectionheading{\section{Introduction}\label{sec:introduction}}
%
%
%
%
%
%

\IEEEPARstart{M}{obile} social networks (MSNs) enable people to share content and communicate without Internet access by exploiting short-range wireless communication technologies such as WiFi Direct and Bluetooth \cite{kayastha2011applications,hu2015survey,mao2017mobile}. Due to the nature of intermittent connectivity, MSNs are often regarded as a special type of delay tolerant network that utilizes opportunistic contacts among mobile users to deliver data \cite{bulut2012exploiting,xiao2014community,jedari2018survey}. Nowadays, people are becoming increasingly inseparable from their portable smart devices such as smartphones. This brings numerous opportunities for people to form temporary groups to exchange information when their portable devices are within each other's transmission range. In particular, MSNs are promising communication systems for people in areas where Internet access is unavailable or too costly. For example, when disasters strike, Internet infrastructure such as cellular networks are among the first pieces of critical infrastructure to fail, leaving individuals disconnected from one another and from vital information sources \cite{nsfmz}. In such scenarios, MSNs will be one of the fastest and most handy ways to provide digital connection among individuals.

Over the past years, significant MSN research effort has been conducted. However, the main focus has been on routing, data dissemination and community detection, leaving a fundamental MSN problem nearly completely untouched, which is local resource management \cite{mao2017mobile}. This surprising phenomenon is probably due to that for various types of wireless networks, local resource management has been extensively studied, and consequently one could expect that their existing solutions might be directly applied or easily extended to MSNs. Unfortunately, this expectation ignores a fundamental difference between MSNs and other popular types of wireless or mobile networks. The difference is that, in the latter cases, there exist base stations (BSs) or access points (APs) for mobile users to get connected to and through the Internet, and in such cases, local resource management, e.g. scheduling the use of airtime among users, typically implicitly assumes that the BSs or APs are always willing to serve, and hence considers only user devices in making the decision. 

However, in MSNs, that seemingly unquestionable assumption may not hold, particularly when a user needs to use her/his smart device to store-carry-forward data for the others  \cite{zhu2015data}. This is because, a smart device normally has limited capacities in terms of e.g. energy, storage, processing and communication. In consequence, local resource management in MSNs not only should consider the devices with data to send or receive, but also must not forget the helpers that contribute additionally in terms of local resources such as energy and communication to act like BSs or APs to help the others. This partly explains why research on MSNs has been progressing for more than one decade but real implementations and adoptions of MSNs in the public are rarely seen today: Overlooking the additional costs incurred to the helper essentially discourages anyone to be helper, which is a foundation for MSNs to work. Nevertheless, the concept of MSNs has drawn huge attention of industry besides academy. Recently, Mozilla and the U.S. National Science Foundation have been running a contest seeking innovative solutions to connect people who are disconnected from the Internet due to disaster or insufficient connectivity \cite{nsfmz}. This provides a great opportunity to work on the missing pieces in MSN research towards public adoption of MSNs, and motives the work of this paper.

In this paper, we focus on a fundamental problem in local resource management, which is \textit{airtime allocation} among users in a temporary group formed on the move, since available airtime is typically limited in MSNs due to user mobility and short transmission range. 
The airtime required for disseminating a piece of data from its source to all the interested group members depends on its communication network topology. In this work, we consider that each group uses a star topology to communicate where one user is selected as \textit{group head} to serve like a personal hot-spot open to the group and manage the group while other users connect to the head as \textit{peripherals}\footnote{Group head and peripherals are called group owner and clients in Wi-Fi Direct terminology \cite{wd}, and called master and slaves in Bluetooth terminology \cite{bt}.}. Such a star form is simple yet practical\footnote{Theoretically, mesh topology is also possible for connecting MSN users in proximity. However, it requires additional functionalities such as multi-hop routing and topology management implemented on users' portable devices. Similarly, while wireless channels are broadcast in nature, using it for applications usually also requires changes to the applications and the various layers below, to be able to make use of this feature.} because it is natively supported by the most popular off-the-shelf short-range communication technologies on portable devices, including WiFi Direct \cite{wd} and Bluetooth \cite{bt}, making the underlying network functionalities transparent to application development. 


With star topology, a group head must be selected among the users. The head needs to forward data for the peripheral users, and thus spends more battery power than them. Therefore, it is important to encourage users to be the head. In addition, since users may have different battery levels and link capacities, \textit{head selection} is critical in that it impacts users' utilities and the amount of data that can be disseminated with the limited airtime. In the literature, various fair airtime (or rate) allocation schemes have been proposed for traditional WLANs and cellular access networks \cite{tan2004time,jiang2005proportional,ramjee2006generalized,li2008proportional}. However, they all implicitly assume that the airtime is long enough so that all the data transmission can be finally completed. This assumption does not generally hold in MSNs where the contact duration among users is limited due to user mobility and short transmission range. In addition, the utility function they use, which is typically $u(x)$ where $x$ is the allocated airtime (or rate), cannot characterize the specifics of users in MSNs, including data dissemination need, preference on other users' data and battery level. Furthermore, unlike previous work, the group head, counterpart of AP in WLANs \cite{li2008proportional} and BS in cellular networks \cite{ramjee2006generalized} respectively,  must also be a target of the airtime allocation, equivalent to the other users connecting to it. These add more difficulty in designing a fair airtime allocation scheme for local data dissemination in MSNs. 

In this paper, we address airtime allocation jointly with head selection among a group of users in an MSN, which, to the best of our knowledge, has never been considered previously. Since anyone in the group may or may not (want to) be the head, a game-theoretic approach is naturally adopted. Specifically, we formulate the problem of joint head selection and airtime allocation as a Nash bargaining game. An advantage of using Nash bargaining is that the solution, if it exists, is known to be Pareto optimal, proportionally fair, and acceptable by all users. Motivated by this, we prove the existence of optimal solution to the joint head selection and airtime allocation Nash bargaining game using decomposition. In addition, we propose a distributed algorithm for joint head selection and airtime allocation, based on the decomposition idea. Moreover, numerical results are presented to provide an overview of the performance, validate intuitions and derive insights of the proposed joint head selection and airtime allocation scheme.

The rest of this paper is organized as follows. In Section~\ref{sec_system}, we introduce the system model including network model, dissemination model, incentive scheme, and user utility function. The Nash bargaining solution (NBS)-based head selection and airtime allocation scheme is proposed and studied in Section \ref{sec_nbs}. In Section~\ref{sec_results}, we show the numerical results. Section \ref{sec_rw} presents related work. Finally, we conclude in Section \ref{sec_con}.  


\section{System Model} 
\label{sec_system} 

Since there are many notations in this paper, we summarize them in Table \ref{table-notations} for reader's convenience. 
\begin{table}[h]
	\caption{Notations.}
	\label{table-notations}
	\centering
	\begin{tabular}{cl}
		\hline        
		Notation	& \multicolumn{1}{c}{Description} \\
		\hline
		$\mathcal{G}$	& User group \\
		$N$	& Number of users in $\mathcal{G}$ \\
		$\mathcal{L}$	& Set of all directed links \\
		$(i,j)$	& Link that sends data from node $i$ to node $j$ \\
		$c_{ij}$	& Rate of link $(i,j)$ \\
		$\mathcal{M}_{i}$	& Set of data that node $i$ intends to disseminate \\
		$\gamma$	& Unit reward \\
		$f_{i}$	& Amount of data node $i$ forwards for other nodes \\
		$u_{i}$	& Utility of node $i$ \\
		$v(\cdot)$	& Valuation function \\
		$g(\cdot)$	& Cost function \\
		$d_{i}$	& Amount of data $i$ disseminates \\
		$b_{i}$	& Amount of data of interests $i$ receives \\
		$e_{i}$	& Total energy consumption of node $i$ \\
		$a_{i}$	& Head indicator \\
		$e^{s}$	& unit energy consumption for sending data \\
		$e^{r}$	& unit energy consumption for receiving data \\
		$s_{i}$	& Amount of data node $i$ sends \\
		$r_{i}$	& Amount of data node $i$ receives \\
		$E_{i}$	& Energy budget of node $i$ \\
		$\delta_{i}$	& Node $i$'s sensitivity to battery power consumption \\
		$x_{i}$	& Airtime for the dissemination of node $i$'s all data \\
		$\alpha_{i}$	& Bargaining power of node $i$ \\
		$T$	& Available airtime \\
		$x_{i}^{m}$	& Airtime for disseminating node $i$'s data $m$ \\
		\hline
	\end{tabular} \\
\end{table}

\subsection{Network Model}

Consider a group\footnote{We assume nearby groups use different channels for data dissemination and data transmission on each channel is independent from the other channels, e.g. in WiFi-Direct. } of users (or nodes) $\mathcal{G} = \{1,2,...,N\}$ in an MSN, which come into contact by opportunity and would like to disseminate their data to other interested nodes in this group. The nodes can communicate with each other by forming a star network $(\mathcal{G,L})$ where $\mathcal{L}$ is the set of all directed links. One of the nodes is selected as the head of the group while other nodes, referred to as peripheral nodes, connect to each other through the head. Denote $c_{ij}$ the rate of link $(i,j)$ that sends data from node $i$ to node $j$. We assume the links in $\mathcal{L}$ may have different rates. 

\subsection{Dissemination Model}

Denote $\mathcal{M}_{i}$ the set of data that node $i$ intends to share to other interested nodes in the group during the contact (There may be some nodes that do not have any data to disseminate but are interested in other nodes' data.). Given that the data of a peripheral node can interest multiple nodes in the group, the head can intentionally store the data (or part of the data) once receiving it from the source node for the first time and then forward it to the rest recipients, so that the limited airtime can be utilized more efficiently than directly sending multiple times from the source node to each recipient. 

\subsection{Incentive Scheme}

Forwarding data for the peripheral nodes will incur a high cost to the energy, storage, etc., therefore, rational nodes are not willing to be the head and forward data for others unconditionally. To encourage nodes to become the head, we assume there is an incentive scheme such that the forwarding behavior is rewarded by the system. Note that the peripheral users do not have to pay to the head for forwarding their data: In practice, such a reward could be in various forms such as popularity and/or reputation in the MSN. For simplicity of analysis, in this paper, we do not restrict the form of implementing the reward and use a linear abstract form of rewarding function, i.e., the node will receive a reward of $\gamma\cdot f$ if it forwards an amount $f$ of data for others, where $\gamma$ is the unit reward. 

\subsection{User Model}

Nodes are effectively autonomous agents, since there is no network-wide control authority. Each node can decide, on its own will, whether to join the group and contribute resources to facilitate data dissemination. In addition, the node selected as the head contributes more resources than client nodes. Therefore, it is reasonable to assume that each node seeks to maximize its utility from data dissemination over a contact. Denote $u_i$ the utility of node $i$, it is given by the valuation of the data it disseminates and the data of interests it receives, minus the energy cost for sending/receiving data, plus the reward for forwarding data for others if $i$ is the head:
\begin{equation}\label{utility}
u_i = v(d_{i}+b_{i})-g(e_{i})+a_{i}\gamma f_{i} 
\end{equation}
where $v(\cdot)$ is the valuation function, $g(\cdot)$ is the cost function, $d_{i}$ is the amount of data $i$ disseminates, $b_{i}$ is the amount of data of interests $i$ receives, $f_{i}$ is the amount of data $i$ forwards for other nodes if it is the head, $e_{i}$ is the total energy consumption for sending and receiving data, and $a_{i}= \{1,0\}$ is the head indicator. For any node $i$, $a_{i} = 1$ means it is selected as the group head while $a_{i} = 0$ means it is a peripheral node. Since there will be only one head, we have $\sum_{i=1}^{N} a_{i} = 1$. Denote $e^{s}$ and $e^{r}$ the unit energy consumption for sending and receiving data, respectively. Then we have $e_{i} = e^{s}s_{i}+e^{r}r_{i}$ where $s_{i}$ is the amount of data it sends and $r_{i}$ is the amount of data it receives. To clarify the difference between $d_{i}$ and $s_{i}$, and the difference between $b_{i}$ and $r_{i}$, an example is illustrated in Fig. \ref{fig-diff}. 
\begin{figure} [!h]
	\centering
	\includegraphics[width=0.32\textwidth]{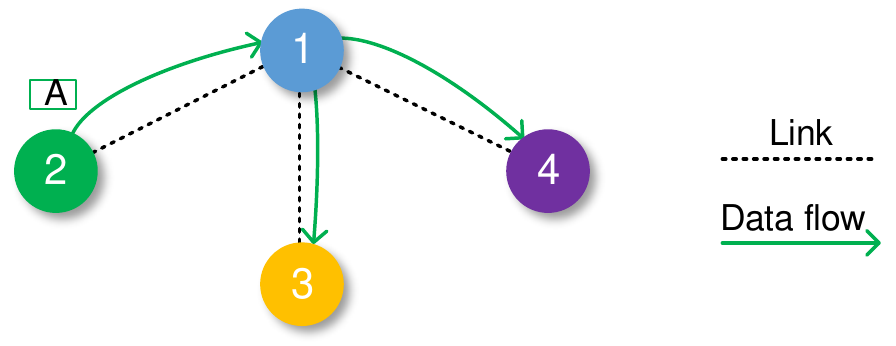}
	\caption{Differences between $d_{i}$ and $s_{i}$, and between $b_{i}$ and $r_{i}$. Node $3$ and $4$ are interested in node $2$'s data \texttt{A} of $5$ MB, but node $1$ is not. Since they cannot communicate directly, node $2$ will send \texttt{A} to the head, i.e., node $1$ first, then node $1$ will forward data \texttt{A} to node $3$ and $4$. As we can see, the amount of data node $2$ disseminates is $d_{2} = 2\times5 = 10$ MB, while the amount of data it directly sends is $s_{2} = 5$ MB. For node $1$, the amount of data it receives is $r_{1} = 5$ MB, while the amount of data of interests it receives is $b_{1} = 0$ MB since it is not interested in data \texttt{A}.}
	\label{fig-diff}
\end{figure}

For the valuation function $v(\cdot)$, we assume it is a strictly concave, positive, and increasing function of $d_{i}+b_{i}$, and $v(\cdot) = 0$ if $d_{i}+b_{i} = 0$. Function $v(d_{i}+b_{i}) = \log(1+d_{i}+b_{i})$ satisfies the above assumptions. Such logarithmic function has been often used in the literature (e.g., \cite{tbma2006game,park2010game,iosifidis2017efficient}) to model a network user's satisfaction or evaluation over certain network resources. For the energy cost function $g(\cdot)$, we assume it is a strictly convex, positive, and increasing function of $e_{i}$, and $g(\cdot) = 0$ if $e_{i} = 0$. In addition, each node $i$ has an energy budget of $E_{i}$ that can be spent during the contact period. Clearly, we need to have $e_{i} \leq E_{i}$.
Function $g(e_{i}) = \delta_{i}(\frac{1}{E_{i}-e_{i}} - \frac{1}{E_{i}})$ satisfies the above assumptions, where $\delta_{i} \in [0,1]$ is a normalization parameter that indicates user $i$'s sensitivity to battery power consumption\footnote{This cost function is a modified version of that used in \cite{iosifidis2017efficient} which does not satisfy $g(0) = 0$.}. For example, a user may have high sensitivity when battery charging is inconvenient. As a rational node will not participate in the group if it will become worse off, it requires $u_i \geq 0$ for all $i \in \mathcal{G}$.


\section{Nash Bargaining to Head Selection And Airtime Allocation} 
\label{sec_nbs}

When a number of users come into each other's proximity, they create a contact opportunity to form a group to exchange data with interested ones. Before they can do that, they have to make a proper decision on head selection and airtime allocation. Since the users are autonomous and rational, each of them would like to benefit from the contact by disseminating its data, receiving data of interests, or obtaining reward. However, the airtime can be very limited due to their mobility so that it would be impossible that everyone gains as much as he/she wants. Therefore, the final decision of head selection and airtime allocation should be acceptable to everyone in order to resolve conflicts of interest. Otherwise, there would be no guarantee that the group will be formed. For such bargaining problems where players not only have incentive to cooperate but also have incentive to oppose each other, Nash bargaining solution (NBS) is an axiomatic approach that can uniquely identify an outcome by its four axioms. In this section, we use Nash bargaining to formulate the problem of airtime allocation jointly with head selection, and analyze the existence of its optimal solution. First of all, we review the basics of NBS in the following section.

\subsection{Basics of Nash Bargaining Solution}\label{sec_nbsbasics}

In this section, we briefly review the concepts and results related to NBS. Consider a bargaining game of $N$ players who bargain or compete for a share of a limited resource (airtime in our case). Throughout the game, the players either reach an agreement on an allocation of the resource or come into disagreement. Let $x_{i}$ be the share of the resource that player $i$ gets, $\mathbf{x} = (x_{1},x_{2},...,x_{N})$ is called a feasible allocation. For each player $i$, it has a utility function $u_{i}(\mathbf{x}): ~ \mathcal{X} \rightarrow \mathcal{R}$ where $\mathcal{X} \subset \mathcal{R}^{N}$ is the set of all possible allocations. Denote $u^{d}_{i}$ the utility of player $i$ when the players come into disagreement, $\mathbf{u^{d}}=(u^{d}_{1},u^{d}_{2},...,u^{d}_{N})$ is called the disagreement point. Then a bargaining game can be formally given by the pair $(\mathcal{U},\mathbf{u^{d}})$ where $\mathcal{U}$ is the set of all feasible utility vectors $\mathbf{u}=(u_{1},u_{2},...,u_{N})$.

Let $ \psi: (\mathcal{U},\mathbf{u^{d}})  \rightarrow \mathcal{R}^{N}$ a bargaining solution that assigns to the bargaining game $(\mathcal{U},\mathbf{u^{d}})$ an element of $\mathcal{U}$. $\psi(\mathcal{U},\mathbf{u^{d}})$ is said to be an NBS if the following axioms are satisfied:
\begin{itemize}
	\item PAR (Pareto optimality). For any $\mathbf{t},\mathbf{t}' \in \mathcal{U}$, if $t_{i} > t'_{i} \text{ for all } i$, then $\psi(\mathcal{U},\mathbf{u^{d}}) \neq \mathbf{t}'$.
	\item ILT (Independence of Linear Transformations). Suppose that the game $(\mathcal{V},\mathbf{v^{d}})$ is obtained from $(\mathcal{U},\mathbf{u^{d}})$ by the transformations $v_{i}=\sigma_{i}u_{i}+\theta_{i}$, $\sigma_{i} > 0$ for all $i$, then  $\psi_{i}(\mathcal{V},\mathbf{v^{d}}) = \sigma_{i}\psi_{i}(\mathcal{U},\mathbf{u^{d}})+\theta_{i}$ for all $i$.
	\item SYM (Symmetry). If $\mathcal{U}$ is invariant under the exchanges of player $i$ and player $j$ and $u^{d}_{i}=u^{d}_{j}$, then $\psi_{i}(\mathcal{U},\mathbf{u^{d}})=\psi_{j}(\mathcal{U},\mathbf{u^{d}})$, for all possible $i,j$.
	\item IIA (Independence of Irrelevant Alternatives). If $(\mathcal{U},\mathbf{u^{d}})$ and $(\mathcal{V},\mathbf{u^{d}})$ are two bargaining games with $\mathcal{V} \subset \mathcal{U}$ and $\psi(\mathcal{U},\mathbf{u^{d}}) \in \mathcal{V}$, then $\psi(\mathcal{U},\mathbf{u^{d}}) = \psi(\mathcal{V},\mathbf{u^{d}})$.
\end{itemize}
PAR ensures no wastage in the resource. ILT states that the bargaining solution is invariant with respect to linear utility transformations. SYM means that if any two players have the same utility function and disagreement utility, they will have the same utility in the bargaining solution. IIA says that if the feasible utility
set shrinks, but the bargaining solution remains feasible in the smaller set, then the bargaining solution to the game with the smaller utility set should be the same. The latter three axioms (i.e., ILT, SYM and IIA) are often regarded as axioms of fairness \cite{yaiche2000game,park2007bargaining}, as they allow NBS to select a fair allocation among the set of all Pareto optimal allocations. More details and interpretations of NBS can be found in \cite{hargreaves2004game}.

Assuming the utility set $\mathcal{U}$ is compact convex and there is at least one $\mathbf{u}$ such that $u_{i} > u^{d}_{i}$ for all $i$, then there exists a unique bargaining solution fulfilling the above four axioms, which maximizes the following Nash product (or Nash welfare) \cite{yaiche2000game}:
\begin{equation}\label{eq-np}
\textstyle\prod\limits^{N}_{i=1} (u_{i}(\mathbf{x})-u^{d}_{i}).
\end{equation}
Though no explicit fairness is defined within the four axioms, NBS shows strong fairness property. 
It is well-known that when $u^{d}_{i} = 0$ for all $i$, NBS guarantees proportional fairness (PF) in utility. An allocation that satisfies PF should be that, moving away from the PF allocation or NBS to any other feasible allocation will not increase the aggregate of proportional changes in utilities \cite{han2005fair,shrimali2010cooperative,bertsimas2011price}. In mathematical terms, $\sum^{N}_{i=1} \frac{u_{i}-u_{i}^{\star}}{u_{i}^{\star}} \leq 0$
where $\mathbf{u^{\star}}=(u_{1}^{\star},u_{2}^{\star},...,u_{N}^{\star})$ is the PF allocation and $\mathbf{u}=(u_{1},u_{2},...,u_{N})$ is any other feasible allocation. Due to such relationship, NBS is often regarded as a generalization of proportional fairness. By relaxing the axiom of SYM \cite{binmore1986nash,yaiche2000game}, the so-called generalized (or asymmetric) NBS can be obtained by maximizing
\begin{equation}\label{eq-gnp}
\textstyle\prod\limits^{N}_{i=1} (u_{i}(\mathbf{x})-u^{d}_{i})^{\alpha_{i}} 
\end{equation}
where $0\leq\alpha_{i}\leq 1$ is the bargaining power and $\sum_{i=1}^{N} \alpha_{i} = 1$. Generalized NBS satisfies the axioms of PAR, ILT and IIA and guarantees weighted proportional fairness which satisfies $\sum^{N}_{i=1} \alpha_{i}\cdot\frac{u_{i}-u_{i}^{\star}}{u_{i}^{\star}} \leq 0$ \cite{cole2013mechanism}.

In the following, we will first elaborate the utility function of users in the cases of homogeneous user preference and heterogeneous user preference, model the head selection and airtime allocation using generalized NBS, and then discuss the existence of optimal solution for both cases. The intention of using generalized NBS instead of standard NBS is to see whether bargaining power allows the head to be selected to gain higher utility than other users, which motivates the users to become the head willingly.

\subsection{Homogeneous User Preference}

\begin{figure*}[!b]
	\hrulefill
	\normalsize
	\setcounter{MYtempeqncnt}{\value{equation}}
	\setcounter{equation}{14}
	\begin{align} \label{exact-utility}
	u_i = & v\Big(\frac{N_{i} x_{i}}{\sum_{(k,j)\in \mathcal{L}_{i}} \frac{1}{c_{kj}}} + \sum_{h\in \mathcal{G}_{-i}} \frac{x_{h}}{\sum_{(k,j)\in \mathcal{L}_{h}}\frac{1}{c_{kj}}}\Big) + a_{i}\gamma \sum\limits_{h\in \mathcal{G}_{-i}} (N_{h}-\beta_{h}) \frac{x_{h}}{\sum_{(k,j)\in \mathcal{L}_{h}}\frac{1}{c_{kj}}} \notag\\
	& -  g\Big(e^{r}\sum_{h\in \mathcal{G}_{-i}} \frac{x_{h}}{\sum_{(k,j)\in \mathcal{L}_{h}}\frac{1}{c_{kj}}} + e^{s}(1-a_{i})\frac{\beta_{i}x_{i}}{\sum_{(k,j)\in \mathcal{L}_{i}} \frac{1}{c_{kj}}} + e^{s}a_{i}\big(\frac{N_{i} x_{i}}{\sum_{(k,j)\in \mathcal{L}_{i}} \frac{1}{c_{kj}}} + \sum\limits_{h\in \mathcal{G}_{-i}}  \frac{(N_{h}-\beta_{h}) x_{h}}{\sum_{(k,j)\in \mathcal{L}_{h}}\frac{1}{c_{kj}}}\big)\Big)
	\end{align}
	\setcounter{equation}{\value{MYtempeqncnt}}
\end{figure*}

Assume the nodes have homogeneous preference on the data, i.e., they are interested in any data that any other nodes would like to disseminate. Define a \textit{dissemination} of the data of any node $i$ the set of transmissions (or links) that send $i$'s data from one node to the other. For a peripheral node, its dissemination includes the transmission from itself to the head and $N-2$ transmissions from the head to other nodes in the group. For the head, its dissemination consists of $N-1$ transmissions from itself to all the peripheral nodes. Denote $x_{i}$ the airtime for the dissemination of node $i$'s data, then the airtime constraint is given by
\begin{equation}\label{cons-time-nopref}
\sum_{i=1}^{N} x_{i} \leq T
\end{equation}
where $T$ is the available airtime. For each node $i$, we have 
\begin{equation}\label{cons-data-nopref}
0 \leq x_{i} \leq \sum_{(k,j)\in \mathcal{L}_{i}}\frac{z_{i}}{c_{kj}}
\end{equation}
where $z_{i}$ is the size of all the data in $\mathcal{M}_{i}$ and $\mathcal{L}_{i}$ is the set of all the links that disseminate $i$'s data. The constraint shown in (\ref{cons-data-nopref}) means that the airtime allocated to $i$'s dissemination should not exceed what it needs.

Within the dissemination of any node's data, we also aim a fair data distribution among all the transmissions. Ideally, the progress of all the transmissions of a given dissemination, defined as the amount of data transmitted, should be equal when the dissemination stops. 
Mathematically, we have
\begin{equation}\label{xlink1-nopref}
\theta_{i} = c_{kj} x_{kj}, \forall (k,j)\in \mathcal{L}_{i}
\end{equation}
where $\theta_{i}$ denotes the amount of data transmitted by every link in $\mathcal{L}_{i}$ and $x_{kj}$ is the airtime for link $(k,j)$ in $\mathcal{L}_{i}$ to disseminate node $i$'s data. Now we can express the airtime for sending $i$'s data via each link in $\mathcal{L}_{i}$ in terms of $\theta_{i}$:
\begin{equation}\label{xlink2-nopref}
x_{kj}  = \frac{\theta_{i}}{c_{kj}}, \forall (k,j)\in \mathcal{L}_{i}.
\end{equation}
Since
\begin{equation}\label{xlink3-nopref}
\sum_{(k,j)\in \mathcal{L}_{i}} x_{kj}  = \sum_{(k,j)\in \mathcal{L}_{i}} \frac{\theta_{i}}{c_{kj}} = x_{i},
\end{equation}
the amount of $i$'s data transmitted by each link $(k,j)\in \mathcal{L}_{i}$ can be given by
\begin{equation}\label{xi-nopref}
\theta_{i} = \frac{x_{i}}{\sum_{(k,j)\in \mathcal{L}_{i}} \frac{1}{c_{kj}}}.
\end{equation}
Then, the amount of data $i$ disseminates within $T$ is given by
\begin{equation}\label{di-nopref}
d_{i} = \sum_{(k,j)\in \mathcal{L}_{i}}\theta_{i} = \frac{N_{i} x_{i}}{\sum_{(k,j)\in \mathcal{L}_{i}} \frac{1}{c_{kj}}} 
\end{equation}
where $N_{i}$ is the number of transmissions in the dissemination of $i$'s data in $T$. $N_{i} = N-1$ if the head has not stored $i$'s data, and $N_{i} = N-2$ if the head has. The amount of data of interests $i$ receives within $T$ can be given by 
\begin{equation}\label{bi-nopref}
b_{i} = \sum_{h\in \mathcal{G}_{-i}} \theta_{h} = \sum_{h\in \mathcal{G}_{-i}} \frac{x_{h}}{\sum_{(k,j)\in \mathcal{L}_{h}}\frac{1}{c_{kj}}}.
\end{equation}
where $\mathcal{G}_{-i} = \mathcal{G}\setminus\{i\}$ is the set of users in $\mathcal{G}$ except $i$. If $i$ will be selected as the head, the amount of data $i$ forwards for other nodes is 
\begin{equation} \label{fi-nopref}
f_{i} = \sum\limits_{h\in \mathcal{G}_{-i}} (N_{h}-\beta_{h}) \frac{x_{h}}{\sum_{(k,j)\in \mathcal{L}_{h}}\frac{1}{c_{kj}}}
\end{equation}
where $\beta_{h} = 1$ means the head has not stored node $h$'s data and $h$ needs to send the data to the head, otherwise $\beta_{h} = 0$. For a peripheral node $i$, it only sends its data to the head, therefore $s_{i} = \frac{\beta_{i}x_{i}}{\sum_{(k,j)\in \mathcal{L}_{i}} \frac{1}{c_{kj}}}$. However, for a head $i$, it not only sends its own data but also others' data to all the interested nodes, therefore we have $s_{i} = d_{i} + f_{i}$. Using a unified expression, we have
\begin{align} \label{si-nopref}
s_{i} = & a_{i}\Big(\frac{N_{i} x_{i}}{\sum_{(k,j)\in \mathcal{L}_{i}} \frac{1}{c_{kj}}} + \sum\limits_{h\in \mathcal{G}_{-i}}  \frac{(N_{h}-\beta_{h}) x_{h}}{\sum_{(k,j)\in \mathcal{L}_{h}}\frac{1}{c_{kj}}}\Big) \notag\\
& + (1-a_{i})\frac{\beta_{i}x_{i}}{\sum_{(k,j)\in \mathcal{L}_{i}} \frac{1}{c_{kj}}}.
\end{align}
In the case of homogeneous preference, we have $r_{i} = b_{i}$, i.e., the amount of data \textit{of interests} $i$ receives equals the amount of data $i$ receives. Finally, the total consumed energy of $i$ is
\begin{align} \label{ei-nopref}
e_{i} & = e^{s}(1-a_{i})\frac{\beta_{i}x_{i}}{\sum_{(k,j)\in \mathcal{L}_{i}} \frac{1}{c_{kj}}} + e^{r}\sum_{h\in \mathcal{G}_{-i}} \frac{x_{h}}{\sum_{(k,j)\in \mathcal{L}_{h}}\frac{1}{c_{kj}}} \notag\\
& + e^{s}a_{i}\Big(\frac{N_{i} x_{i}}{\sum_{(k,j)\in \mathcal{L}_{i}} \frac{1}{c_{kj}}} + \sum\limits_{h\in \mathcal{G}_{-i}}  \frac{(N_{h}-\beta_{h}) x_{h}}{\sum_{(k,j)\in \mathcal{L}_{h}}\frac{1}{c_{kj}}}\Big).
\end{align}

Replacing $d_{i}$, $b_{i}$, $f_{i}$, and $e_{i}$ in (\ref{utility}) by (\ref{di-nopref}), (\ref{bi-nopref}), (\ref{fi-nopref}) and (\ref{ei-nopref}), we obtain the utility of any user $i$ shown in (\ref{exact-utility}) at the bottom of this page.
\addtocounter{equation}{1}
\begin{figure*}[!b]
	\hrulefill
	\normalsize
		\setcounter{MYtempeqncnt}{\value{equation}}
		\setcounter{equation}{30}
	\begin{align} \label{exact-utility-hetero}
	u_i = 
	& v\Big(\sum_{m=1}^{M_{i}} \frac{N^{m}_{i} x^{m}_{i}}{\sum_{(k,j)\in \mathcal{L}^{m}_{i}}\frac{1}{c_{kj}}} + \sum_{h\in \mathcal{B}_{i}}\sum_{m\in \mathcal{M}^{h}_{i}} \frac{x^{m}_{h}}{\sum_{(k,j)\in \mathcal{L}^{m}_{h}}\frac{1}{c_{kj}}}\Big) + a_{i}\gamma \sum\limits_{h\in \mathcal{G}_{-i}} \sum_{m=1}^{M_{h}} (N^{m}_{h}-\beta^{m}_{h}) \frac{x^{m}_{h}}{\sum_{(k,j)\in \mathcal{L}^{m}_{h}}\frac{1}{c_{kj}}} \notag\\
	& -g\Bigg((1-a_{i})\Big(e^{s}\sum_{m=1}^{M_{i}} \frac{x^{m}_{i}\beta^{m}_{i}}{\sum_{(k,j)\in \mathcal{L}^{m}_{i}}\frac{1}{c_{kj}}} + e^{r}\sum_{h\in \mathcal{B}_{i}}\sum_{m\in \mathcal{M}^{h}_{i}} \frac{x^{m}_{h}}{\sum_{(k,j)\in \mathcal{L}^{m}_{h}}\frac{1}{c_{kj}}}\Big) + a_{i}\Big(e^{s}\Big(\sum_{m=1}^{M_{i}} N^{m}_{i} \frac{x^{m}_{i}}{\sum_{(k,j)\in \mathcal{L}^{m}_{i}}\frac{1}{c_{kj}}} \\
	& + \sum\limits_{h\in \mathcal{G}_{-i}} \sum_{m=1}^{M_{h}} (N^{m}_{h}-\beta^{m}_{h}) \frac{x^{m}_{h}}{\sum_{(k,j)\in \mathcal{L}^{m}_{h}}\frac{1}{c_{kj}}}\Big) + e^{r}\sum\limits_{h\in \mathcal{G}_{-i}} \sum_{m=1}^{M_{h}} \frac{x^{m}_{h}\beta^{m}_{h}}{\sum_{(k,j)\in \mathcal{L}^{m}_{h}}\frac{1}{c_{kj}}}\Big)\Bigg) \notag
	\end{align}
		\setcounter{equation}{\value{MYtempeqncnt}}
\end{figure*}
From (\ref{exact-utility}), we can see that $u_{i}$ is a function of $\mathbf{x}$ and $\mathbf{a}$ where $\mathbf{x} = (x_{1}, x_{2},...,x_{N})$ and $\mathbf{a} = (a_{1}, a_{2},...,a_{N})$. We assume there is at least one $(\mathbf{x,a})$ makes $u_{i} \geq 0$ $\forall i \in \mathcal{G}$ otherwise nodes have no motivation to join the group and disseminate their data. Formally, the generalized NBS for the problem of head selection and airtime allocation for the case of homogeneous user preference can be obtained by maximizing the following generalized Nash product:
\begin{align}\label{full-nopref}
 \underset{\mathbf{x,a}}{\max}\quad\quad & \textstyle\prod\limits_{i=1}^{N} u_{i}(\mathbf{x,a})^{\alpha_{i}} \\
 \text{s.t.}\quad\quad             & \sum_{i=1}^{N} x_{i} \leq T  \label{c-at}\\
 & 0 \leq x_{i} \leq \sum_{(k,j)\in \mathcal{L}_{i}}\frac{z_{i}}{c_{kj}}, & ~\forall i \in \mathcal{G}  \label{c-atr}\\
 & u_{i}(\mathbf{x,a}) \geq 0, &  ~\forall i \in \mathcal{G}  \label{c-ui}\\
 & e_{i} \leq E_{i}, &  ~\forall i \in \mathcal{G}  \label{c-ei}\\
 & a_{i} = \{1,0\}, &  ~\forall i \in \mathcal{G}  \label{c-head1}\\
 & \sum_{i=1}^{N} a_{i} = 1. \label{c-head2}
\end{align}
Under the generalized NBS framework, a higher generalized Nash product means a better decision of head selection and airtime allocation. In our case, users' utilities are zero at the disagreement point since they will get nothing if no group is formed. Eq. (\ref{c-at}) represents the airtime constraint for all the group members. Eq. (\ref{c-atr}) states that the airtime allocated to the dissemination of any user's data should be nonnegative and not be longer than the maximum airtime required. Eq. (\ref{c-ui}) ensures individual rationality. Eq. (\ref{c-ei}) limits the energy consumption of each user to its energy budget. Finally, Eq. (\ref{c-head1}) and (\ref{c-head2}) indicate that only one of the users would be selected as the group head. 

The problem  (\ref{full-nopref}) has at least one optimal solution. The proof of this statement is skipped, because in Section \ref{sec-hetero} below, a more general case, the heterogeneous case, is studied. For this more general case, it will be proved with details that the same statement holds, as shown in Theorem \ref{theorem-pref}, for the more generalized problem (\ref{full-pref}) that corresponds to the problem  (\ref{full-nopref}) here.

\subsection{Extension To Heterogeneous User Preference}
\label{sec-hetero}

The above model applies to MSN systems where users are interested in the same data. However, in some MSN systems (e.g., publish-subscribe systems), users may be interested in different data. In this section, we extend the above model to cases with heterogeneous user preferences.

Consider that there could be multiple data in $\mathcal{M}_{i}$. Let $\mathcal{L}^{m}_{i}$ be the set of links that disseminate node $i$'s data $m$. Denote $x^{m}_{i}$ the airtime for disseminating $i$'s data $m$. Then the total airtime $x_{i}$ for disseminating $i$'s data is $\sum_{m=1}^{M_{i}} x^{m}_{i}$ where $M_{i} = |\mathcal{M}_{i}|$ is the number of data in $\mathcal{M}_{i}$. Then the airtime constraint is given by
\begin{equation}\label{multirate-preference-cons-time}
\sum_{i=1}^{N}\sum_{m=1}^{M_{i}} x^{m}_{i} \leq T.
\end{equation}
For each node $i$  and its data $m$, we have 
\begin{equation}\label{multirate-preference-cons-data}
0 \leq x^{m}_{i} \leq \sum_{(k,j)\in \mathcal{L}^{m}_{i}}\frac{z^{m}_{i}}{c_{kj}}
\end{equation}
where $z^{m}_{i}$ is the size of data $m$. The total amount of data $i$ disseminates is
\begin{equation}\label{di-pref}
d_{i} = \sum_{m=1}^{M_{i}} N^{m}_{i} \frac{x^{m}_{i}}{\sum_{(k,j)\in \mathcal{L}^{m}_{i}}\frac{1}{c_{kj}}}
\end{equation}
where $N^{m}_{i}$ is the number of transmissions in the dissemination of $i$'s data $m$. Denote $\mathcal{B}_{i}$ the set of nodes that disseminate data to $i$ (or equivalently the set of nodes that $i$ is interested in their data). The amount of data of interests $i$ receives is 
\begin{equation}\label{bi-pref}
b_{i} = \sum_{h\in \mathcal{B}_{i}}\sum_{m\in \mathcal{M}^{h}_{i}} \frac{x^{m}_{h}}{\sum_{(k,j)\in \mathcal{L}^{m}_{h}}\frac{1}{c_{kj}}}
\end{equation}
where $\mathcal{M}^{h}_{i}$ is the set of $h$'s data sent to $i$. If $i$ will be the head after selection, the amount of data it forwards is
\begin{equation}\label{fi-pref}
f_{i}= \sum\limits_{h\in \mathcal{G}_{-i}} \sum_{m=1}^{M_{h}} (N^{m}_{h}-\beta^{m}_{h}) \frac{x^{m}_{h}}{\sum_{(k,j)\in \mathcal{L}^{m}_{h}}\frac{1}{c_{kj}}}
\end{equation}
where $\beta^{m}_{h} = 1$ means the head has not stored $h$'s data $m$ and $h$ needs to send $m$ to the head, otherwise $\beta^{m}_{h} = 0$. The amount of data $i$ sends is given by
\begin{align} \label{si-pref}
s_{i} = & a_{i}\Big(\sum_{m=1}^{M_{i}} N^{m}_{i} \frac{x^{m}_{i}}{\sum_{(k,j)\in \mathcal{L}^{m}_{i}}\frac{1}{c_{kj}}} \notag\\
& + \sum\limits_{h\in \mathcal{G}_{-i}} \sum_{m=1}^{M_{h}} (N^{m}_{h}-\beta^{m}_{h}) \frac{x^{m}_{h}}{\sum_{(k,j)\in \mathcal{L}^{m}_{h}}\frac{1}{c_{kj}}}\Big) \\
& + (1-a_{i})\sum_{m=1}^{M_{i}} \frac{x^{m}_{i}\beta^{m}_{i}}{\sum_{(k,j)\in \mathcal{L}^{m}_{i}}\frac{1}{c_{kj}}}. \notag
\end{align}
For a peripheral node $i$, we still have $r_{i} = b_{i}$. However, for the head, it does not hold, since it may receive some data of no interest and only for forwarding. Then the amount of data $i$ receives is
\begin{align} \label{ri-pref}
r_{i} = & a_{i}\sum\limits_{h\in \mathcal{G}_{-i}} \sum_{m=1}^{M_{h}} \frac{x^{m}_{h}\beta^{m}_{h}}{\sum_{(k,j)\in \mathcal{L}^{m}_{h}}\frac{1}{c_{kj}}} \notag\\
& + (1 - a_{i})\sum_{h\in \mathcal{B}_{i}}\sum_{m\in \mathcal{M}^{h}_{i}} \frac{x^{m}_{h}}{\sum_{(k,j)\in \mathcal{L}^{m}_{h}}\frac{1}{c_{kj}}}. 
\end{align}
Finally, the total consumed energy of $i$ is
\begin{align} \label{ei-pref}
e_{i} = & (1-a_{i})\Big(e^{s}\sum_{m=1}^{M_{i}} \frac{x^{m}_{i}\beta^{m}_{i}}{\sum_{(k,j)\in \mathcal{L}^{m}_{i}}\frac{1}{c_{kj}}}  \notag\\
& + e^{r}\sum_{h\in \mathcal{B}_{i}}\sum_{m\in \mathcal{M}^{h}_{i}} \frac{x^{m}_{h}}{\sum_{(k,j)\in \mathcal{L}^{m}_{h}}\frac{1}{c_{kj}}}\Big) \notag\\
& + a_{i}\Bigg(e^{s}\Big(\sum_{m=1}^{M_{i}} N^{m}_{i} \frac{x^{m}_{i}}{\sum_{(k,j)\in \mathcal{L}^{m}_{i}}\frac{1}{c_{kj}}} \\
& + \sum\limits_{h\in \mathcal{G}_{-i}} \sum_{m=1}^{M_{h}} (N^{m}_{h}-\beta^{m}_{h}) \frac{x^{m}_{h}}{\sum_{(k,j)\in \mathcal{L}^{m}_{h}}\frac{1}{c_{kj}}}\Big)  \notag\\
& + e^{r}\sum\limits_{h\in \mathcal{G}_{-i}} \sum_{m=1}^{M_{h}} \frac{x^{m}_{h}\beta^{m}_{h}}{\sum_{(k,j)\in \mathcal{L}^{m}_{h}}\frac{1}{c_{kj}}}\Bigg).  \notag
\end{align}
Replacing $d_{i}$, $b_{i}$, $f_{i}$, and $e_{i}$ in (\ref{utility}) by (\ref{di-pref}), (\ref{bi-pref}), (\ref{fi-pref}), and (\ref{ei-pref}), we obtain the utility of any user $i$ shown in (\ref{exact-utility-hetero}) at the bottom of this page.
\addtocounter{equation}{1}
Formally, the generalized NBS for the problem of head selection and airtime allocation for the case of heterogeneous user preference can be obtained by maximizing the following optimization problem
\begin{align}\label{full-pref}
 \underset{\mathbf{x,a}}{\max}\quad\quad & \textstyle\prod\limits_{i=1}^{N} u_{i}(\mathbf{x,a})^{\alpha_{i}} \\
 \text{s.t.}\quad\quad             & \sum_{i=1}^{N}\sum_{m=1}^{M_{i}} x^{m}_{i} \leq T \label{c2-at}\\
 & 0 \leq x^{m}_{i} \leq \sum_{(k,j)\in \mathcal{L}^{m}_{i}}\frac{z^{m}_{i}}{c_{kj}}, & ~\forall i \in \mathcal{G}, m \in \mathcal{M}_{i} \\
 & u_{i}(\mathbf{x,a}) \geq 0, &  ~\forall i \in \mathcal{G} \\
 & e_{i} \leq E_{i}, &  ~\forall i \in \mathcal{G} \\
 & a_{i} = \{1,0\}, &  ~\forall i \in \mathcal{G} \\
 & \sum_{i=1}^{N} a_{i} = 1. \label{c2-head2}
\end{align}
Constraints (\ref{c2-at}) to (\ref{c2-head2}) have the same meaning with constraints (\ref{c-at}) to (\ref{c-head2}), respectively. Assuming there is at least one $(\mathbf{x,a})$ makes $u_{i} \geq 0$ $\forall i \in \mathcal{G}$, we have the following theorem.

\begin{Theorem}\label{theorem-pref}
	There exists at least one optimal solution to optimization problem (\ref{full-pref}) for joint head selection and airtime allocation.
\end{Theorem}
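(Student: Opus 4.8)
\emph{Proof approach.} The plan is to exploit the combinatorial nature of the head indicators to reduce (\ref{full-pref}) to finitely many continuous optimization problems, and then settle each of them by a compactness argument.

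First I would observe that the constraints $a_i\in\{0,1\}$ and $\sum_{i=1}^N a_i=1$ in (\ref{full-pref}) (the last two lines, cf.\ (\ref{c2-head2})) allow $\mathbf{a}$ to take only the $N$ values $\mathbf{e}_1,\dots,\mathbf{e}_N$, where $\mathbf{e}_k$ has a $1$ in position $k$ (node $k$ is the head) and $0$ elsewhere. Fixing $\mathbf{a}=\mathbf{e}_k$ turns (\ref{full-pref}) into a purely continuous program $P_k$ over the airtime vector $\mathbf{x}=(x_i^m)$, with objective $\prod_{i=1}^N u_i(\mathbf{x},\mathbf{e}_k)^{\alpha_i}$ and feasible set $\mathcal{F}_k$ cut out by (\ref{c2-at}), the box constraints (\ref{multirate-preference-cons-data}), the rationality constraints $u_i(\mathbf{x},\mathbf{e}_k)\ge 0$, and the energy budgets $e_i\le E_i$. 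Since the feasible set of (\ref{full-pref}) is precisely $\bigcup_{k=1}^N\mathcal{F}_k$, which is non-empty by the standing assumption, and since $\max_{\bigcup_k\mathcal{F}_k}=\max_k\max_{\mathcal{F}_k}$, it suffices to show that every $P_k$ with $\mathcal{F}_k\ne\emptyset$ attains its maximum, and then take the best of the finitely many subproblem optima.

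Next, for a fixed $k$ with $\mathcal{F}_k\ne\emptyset$ I would check the hypotheses of the Weierstrass extreme value theorem. For compactness of $\mathcal{F}_k$: the box constraints (\ref{multirate-preference-cons-data}) confine each $x_i^m$ to a closed bounded interval, so $\mathcal{F}_k$ is bounded; the airtime budget (\ref{c2-at}) and the energy budgets $e_i\le E_i$ are closed linear inequalities in $\mathbf{x}$, since $e_i$ in (\ref{ei-pref}) is linear in $\mathbf{x}$ once $\mathbf{a}$ is fixed; and each rationality constraint $u_i\ge 0$ defines a closed set once we extend $u_i$ by $-\infty$ on $\{e_i\ge E_i\}$, where the cost term $g(e_i)=\delta_i\!\left(\frac{1}{E_i-e_i}-\frac{1}{E_i}\right)$ blows up---this extension is upper semicontinuous (along any sequence on which $u_i$ stays $\ge 0$ one cannot have $e_i\uparrow E_i$), and superlevel sets of u.s.c.\ functions are closed. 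Hence $\mathcal{F}_k$, a finite intersection of closed sets inside a box, is compact. For continuity of the objective on $\mathcal{F}_k$: at every point of $\mathcal{F}_k$ one has $e_i<E_i$ for all $i$ (otherwise $u_i=-\infty<0$), so $g(e_i)$ is finite and continuous there; together with continuity of $v$ and linearity of $d_i,b_i,f_i,e_i$ in $\mathbf{x}$, each $u_i(\cdot,\mathbf{e}_k)$ is continuous and nonnegative on $\mathcal{F}_k$, hence so is the Nash product $\prod_{i=1}^N u_i(\cdot,\mathbf{e}_k)^{\alpha_i}$ (using $t^0=1$ for any factor with $\alpha_i=0$). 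By Weierstrass, $P_k$ attains a maximizer $\mathbf{x}_k^\star\in\mathcal{F}_k$.

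Finally, picking $k^\star$ that maximizes $\prod_{i=1}^N u_i(\mathbf{x}_k^\star,\mathbf{e}_k)^{\alpha_i}$ over the non-empty finite set $\{k:\mathcal{F}_k\ne\emptyset\}$, the pair $(\mathbf{x}_{k^\star}^\star,\mathbf{e}_{k^\star})$ is an optimal solution of (\ref{full-pref}), which proves the theorem. I expect the one genuinely delicate point to be the closedness of the rationality constraints in the presence of the singularity of $g$ at $e_i=E_i$; the upper-semicontinuous extension of $u_i$ described above is the clean way to handle it, and it simultaneously explains why the constraint $e_i\le E_i$ is automatically respected in the effective feasible region. Everything else---boundedness from the box, closedness of the linear constraints, and continuity of the Nash product on the resulting compact set---is routine.
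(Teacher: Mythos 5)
Your proposal is correct, and it follows the same structural idea as the paper: fix the head indicator $\mathbf{a}=\mathbf{e}_k$, solve the resulting continuous airtime problem for each of the $N$ candidates, and then take the best of the finitely many subproblem optima (the paper phrases this as a lower-level sub-problem (\ref{lower-nopref}) and a master problem (\ref{master-nopref})). Where you diverge is in how the fixed-head subproblem is settled. The paper takes the convex-programming route: it logarithmizes the Nash product, notes that $u_i$ is strictly concave in $\mathbf{x}$ (since $d_i,b_i,f_i,e_i$ are linear in $\mathbf{x}$ once $\mathbf{a}$ is fixed, $v$ is strictly concave and $-g$ is concave), observes the constraints are convex and non-empty, and concludes existence \emph{and uniqueness} of the subproblem solution. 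That uniqueness is used later (Remark 1 and Algorithm 1), so the paper's argument buys something your Weierstrass argument does not. Conversely, your compactness argument is more careful about a point the paper glosses over: strict concavity on a non-empty convex set does not by itself guarantee attainment --- one also needs the feasible set to be compact, and your handling of the closedness of the rationality constraints $u_i\ge 0$ in the presence of the singularity of $g$ at $e_i=E_i$ (via the upper-semicontinuous $-\infty$ extension) is exactly the delicate step that makes this rigorous; your route also needs no concavity and so would survive more general utility functions. One small slip: your claim that every feasible point has $e_i<E_i$ fails when $\delta_i=0$ (then $g\equiv 0$ and $e_i=E_i$ is admissible), but $u_i$ is then continuous there anyway, so the continuity of the Nash product on $\mathcal{F}_k$ and hence the conclusion are unaffected.
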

\begin{proof}
	In fact, the optimization problem (\ref{full-pref}) has two levels of optimization. At the lower level, each user $i$ in the group solves a sub-problem (a local generalized NBS problem) that finds optimal airtime allocation among all the users when user $i$ is the head. At the higher level, we have a master problem that chooses the best $i$ to be the head, which gives the highest generalized Nash product.
	
	Mathematically, the sub-problem for each user is given by
	\begin{align}\label{lower-nopref}
	 \underset{\mathbf{x}}{\max}\quad\quad    & \textstyle\prod\limits_{i=1}^{N} u_{i}(\mathbf{x})^{\alpha_{i}} \notag\\
	 \text{s.t.}\quad\quad	& \sum_{i=1}^{N}\sum_{m=1}^{M_{i}} x^{m}_{i} \leq T \notag\\
	& 0 \leq x^{m}_{i} \leq \sum_{(k,j)\in \mathcal{L}^{m}_{i}}\frac{z^{m}_{i}}{c_{kj}}, & ~\forall i \in \mathcal{G}, m \in \mathcal{M}_{i} \\
	& u_{i}(\mathbf{x}) \geq 0, &  ~\forall i \in \mathcal{G} \notag\\
	& e_{i} \leq E_{i}, &  ~\forall i \in \mathcal{G}. \notag
	\end{align}
	where $\mathbf{a}$ is fixed and $u_{i}$ is only a function of $\mathbf{x}$. Since $v(\cdot)$ and $-g(\cdot)$ are strictly concave functions, by the concavity preserving rules in \cite{boyd2004convex}, we can see that $u_i$ is a strictly concave function in $\mathbf{x}$. Since the function of $log$ is concave and monotonic, the objective function of problem (\ref{lower-nopref}) is equivalent to \cite{yaiche2000game}
	\begin{equation}\label{equl-lower-nopref}
	\max\limits_{\mathbf{x}} \textstyle\sum\limits_{i=1}^{N} \alpha_{i}\log u_{i}(\mathbf{x})
	\end{equation}
	It is easy to see that (\ref{equl-lower-nopref}) is strictly concave and the constraints in (\ref{lower-nopref}) are convex. Additionally, we have assumed that there is at least one feasible point, meaning the constraint set is non-empty, therefore there exists a unique optimal solution to problem (\ref{equl-lower-nopref}) and equivalently to problem (\ref{lower-nopref}) \cite{boyd2004convex}.
	
	At the higher level, the master problem is 
	\begin{align}\label{master-nopref}
	 \underset{\mathbf{a}}{\max}\quad\quad\quad\quad	& p^{\star}(\mathbf{a}) \notag\\
	 \text{s.t.}\quad\quad\quad\quad	&  a_{i} = \{1,0\}, &  ~\forall i \in \mathcal{G} \\
	 & \sum_{i=1}^{N} a_{i} = 1 \notag
	\end{align}
	where $p^{\star}(\mathbf{a}) = \max\limits_{\mathbf{x}}\prod_{i=1}^{N} u_{i}(\mathbf{x})^{\alpha_{i}}$ is the optimal objective value of problem (\ref{lower-nopref}) for a given $\mathbf{a}$ (namely, a given user being the head). Since there will be only one $a_{i}$ equals $1$ and the rest are $0$, the objective of the master problem (\ref{master-nopref}) is essentially finding the largest within $N$ real numbers, which always exists. The pair(s) $(\mathbf{x,a})$ resulting in the largest number will be the optimal solution(s)\footnote{Strictly speaking, there might be multiple maximum in $N$ real numbers. Therefore, we do not claim uniqueness of the optimal solution.} to the whole problem. 
\end{proof}

\textbf{Remark 1}: In the above proof, we show that there is a unique optimal solution to each local generalized NBS problem (\ref{lower-nopref}), meaning the axioms of PAR, ILT and IIA presented in Section \ref{sec_nbsbasics} are satisfied in maximizing each local generalized NBS problem. Note that the master problem is just finding the user giving the highest generalized Nash product among all users. Suppose user $i$ is finally selected as the head, then the final optimal airtime allocation will be the optimal airtime allocation to the local generalized NBS problem of user $i$, which satisfies the axioms of PAR, ILT and IIA. Therefore, no matter which user is selected as the head, the axioms of PAR, ILT and IIA are always satisfied.

\textbf{Remark 2}: Letting $m = 1$ and $\mathcal{B}_{i} = \mathcal{G}_{-i}$, we can see that it reduces to the model for the homogeneous case. In other words, the homogeneous case is a special instance of the heterogeneous case. Therefore, Theorem \ref{theorem-pref} holds for the homogeneous case as well.

\subsection{Algorithm} 

Based on the idea of decomposition in the proof of Theorem (\ref{theorem-pref}), we present an algorithm that can find a unique optimal solution in a distributed fashion. First of all, each node $i$ solves problem (\ref{lower-nopref}) with $a_{i} = 1$. Since problem (\ref{lower-nopref}) is a convex optimization problem with inequality constraints, its optimal solution can be found by interior point methods\cite{boyd2004convex}. After solving the problem, node $i$ sends the optimal results $(\mathbf{x^{\star}})_{i}$ and $(p^{\star})_{i}$, i.e., the optimal airtime allocation and generalized Nash product given $i$ is the head, to all other nodes in $\mathcal{G}$. Once receiving the optimal results from all other nodes, each node $i$ checks which node being the head will result in the largest generalized Nash product. If node $i$ happens to have the largest generalized Nash product, it will become the group head and $(\mathbf{x^{\star}})_{i}$ will be the final optimal airtime allocation. It is possible that multiple nodes have the largest generalized Nash product and any such node can be the head. In such cases, the node with the lowest index will be selected as the head without loss of generality\footnote{Certainly, there are other approaches to determine the final group head, such as selecting the node with the highest battery power. Actually, in the numerical results, it will be shown that user with high energy budget is preferred to be the head if we do not specifically select them.}. The above steps is summarized in Algorithm \ref{alg-allocation}.
\begin{algorithm}
	\caption{Joint Head Selection And Airtime Allocation}
	\label{alg-allocation}
	\renewcommand{\algorithmicensure}{\textbf{Output:}}
	\begin{algorithmic}[1]
		\Ensure $(\mathbf{x^{\star},a^{\star}})$
		\For{$i \in \mathcal{G}$}
		\State Solve problem (\ref{lower-nopref}) with $a_{i} = 1$ to get $(\mathbf{x^{\star}})_{i}$ and $(p^{\star})_{i}$
		\State Send $(p^{\star})_{i}$ to all other nodes in $\mathcal{G}$
		\While {receiving $(p^{\star})_{k}$ from all $k\neq i \in \mathcal{G}$}
		\If {$i = \min \{\underset{k \in \mathcal{G}}{\arg\max} (p^{\star})_{k}\}$}
		\State  set $a_{i} =1$ and $\mathbf{x^{\star}} = (\mathbf{x^{\star}})_{i}$
		\Else 
		\State  set $a_{i} =0$
		\EndIf
		\EndWhile
		\EndFor
	\end{algorithmic}
\end{algorithm}

\subsection{Handling Dynamics} 

Once the group members are known to all, Algorithm \ref{alg-allocation} will be applied to select a proper head and allocate airtime to each user. In practice, node mobility would introduce dynamics to a formed group, such as node leaving and joining. To cope with such dynamics, we presume a group management function is implemented at each node, which allows the head on duty to update group information periodically. Once an event (e.g., a new node joins) is detected, the head on duty will inform other users to start a new round of head selection and airtime allocation by applying Algorithm \ref{alg-allocation}.

\section{Numerical Results} 
\label{sec_results}

In this section, we demonstrate how the NBS-based approach to the joint head selection and airtime allocation problem performs. Particularly, we show the impacts of different parameters such as energy budget, unit reward and link capacity on the behavior of the proposed approach. The tools used in solving the optimization problems and obtaining the results are AMPL\footnote{A powerful tool that can solve high-complexity optimization problems. (www.ampl.com)} jointly with MATLAB. In the utility function of users, we use $v(d_{i}+b_{i}) = \log(1+d_{i}+b_{i})$ as the valuation function and $g(e_{i}) = \delta_{i}(\frac{1}{E_{i}-e_{i}} - \frac{1}{E_{i}})$ as the cost function.

\begin{table*}[t]
	\caption{Utilities of the users with different users being the head when user $1$'s budget is $300$ Joules.}
	\label{table-eb-ut}
	\centering
	\begin{threeparttable}
		\begin{tabular}{c|cccc|cccc}
			\hline
			Sensitivity      & \multicolumn{4}{c|}{$\mathbf{\delta} = [0,1,1,1]$} & \multicolumn{4}{c}{$\mathbf{\delta} = [1,1,1,1]$} \\\cline{1-9}
			Candidate head   & User 1     & User 2     & User 3     & User 4     & User 1     & User 2     & User 3     & User 4 \\ \hline
			$u_{1}$ & $3.9155$   & $3.7973$   & $3.7973$   & $3.7973$   & $3.9029$   & $3.7935$   & $3.7935$   & $3.7935$ \\
			$u_{2}$ & $3.7954$   & $3.9122$   & $3.7954$   & $3.7954$   & $3.7950$   & $3.9121$   & $3.7951$   & $3.7951$ \\
			$u_{3}$ & $3.7948$   & $3.7948$   & $3.9103$   & $3.7948$   & $3.7944$   & $3.7946$   & $3.9102$   & $3.7946$ \\
			$u_{4}$ & $3.7948$   & $3.7948$   & $3.7948$   & $3.9103$   & $3.7944$   & $3.7946$   & $3.7946$   & $3.9102$ \\\hline
			GNP & \cellcolor{blue!25}$214.0044$ & $213.9364$ & $213.8605$ & $213.8605$ & $213.2454$ & \cellcolor{blue!25}$213.6849$ & $213.6091$ & $213.6091$ \\\hline 						     
		\end{tabular}
		GNP is short for generalized Nash product. Each colored value is the highest GNP among all four values and implies that the corresponding user is selected as the head.
	\end{threeparttable}
\end{table*}

\subsection{Setup}

We consider a set of $4$ nodes $\{1,2,3,4\}$ that are in proximity with each other\footnote{Though the number of users is few, the most fundamentals are revealed. Evaluation with more than four users has also been conducted, with the same observations.}. We assume that the communication technology they use is Wi-Fi Direct which supports star network topology \cite{wd}. Since it is built on traditional Wi-Fi infrastructure mode, Wi-Fi Direct can achieve typical Wi-Fi speeds. In a recent experimental study we conducted, it is found that the network is able to provide an average capacity of more than $4$ MB/s (equivalent to $32$ Mb/s) for local data dissemination \cite{mao2017performance}. We consider an average energy consumption of $2.85$ Joule/MB for both sending and receiving via Wi-Fi Direct \cite{iosifidis2017efficient}. At the beginning of the contact, each user has one data to share with others and the size of each data is $10$ MB (e.g., a short video clip or high-definition photo). 
The available airtime $T$ is $20$ seconds. 

\subsection{Energy Budget and Sensitivity to Energy Consumption}

Hereafter, we use budget and sensitivity to represent energy budget and sensitivity to energy consumption respectively. We set user $[2, 3, 4]$'s budget to $[500, 400, 400]$ Joules and vary user $1$'s budget in $\{50, 100, 300, 500\}$ Joules to see how head selection and airtime allocation are affected. We also consider two types of sensitivity, i.e., $\mathbf{\delta} = [1,1,1,1]$ where all the users have the highest sensitivity and $\mathbf{\delta} = [0,1,1,1]$ where user $1$ is insensitive because e.g., it has a power bank. In addition, the users have the same bargaining power and unit reward is set to $0.01$. 
\begin{figure}[t]
	\centering
	\subfigure{\includegraphics[width=0.4\textwidth]{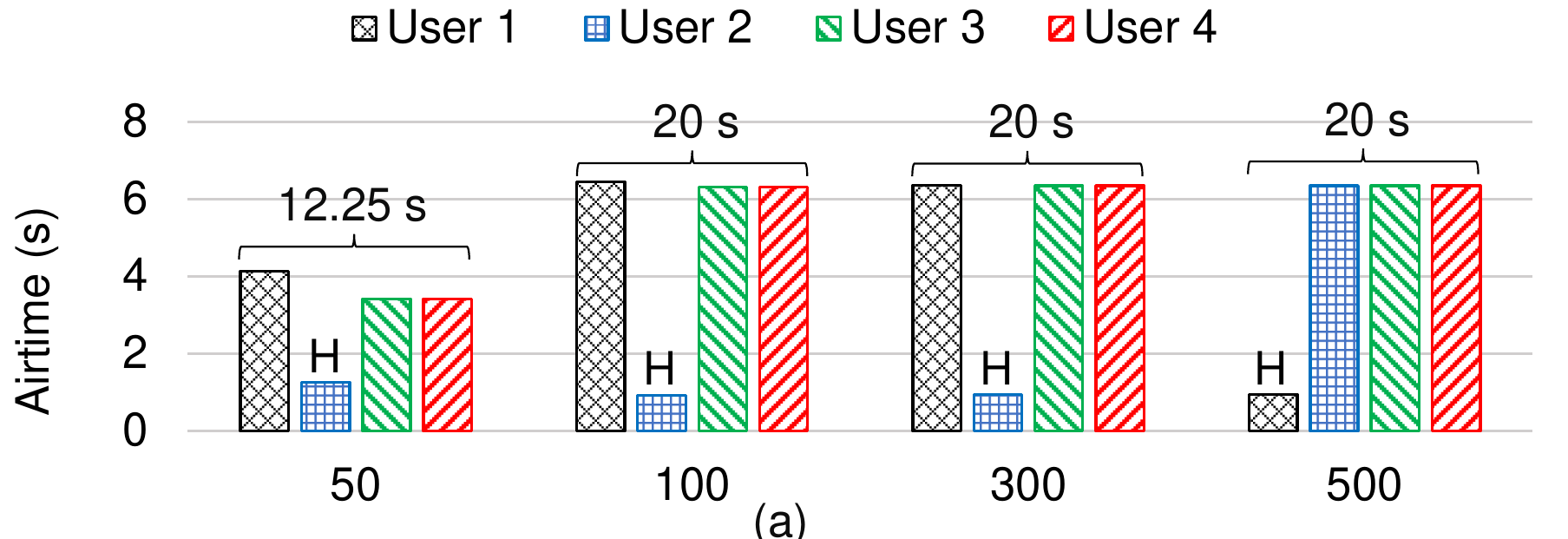}
		\label{fig-eb-airtime}}\\[-1ex]
	\subfigure{\includegraphics[width=0.4\textwidth]{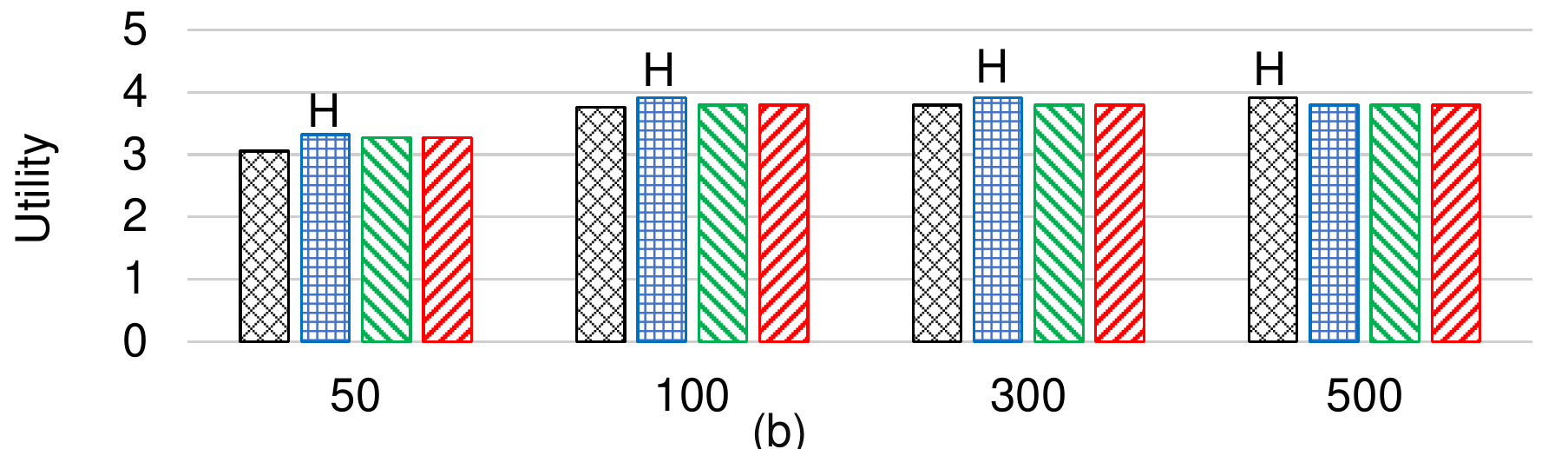}
		\label{fig-eb-utility}}\\[-1ex]
	\subfigure{\includegraphics[width=0.4\textwidth]{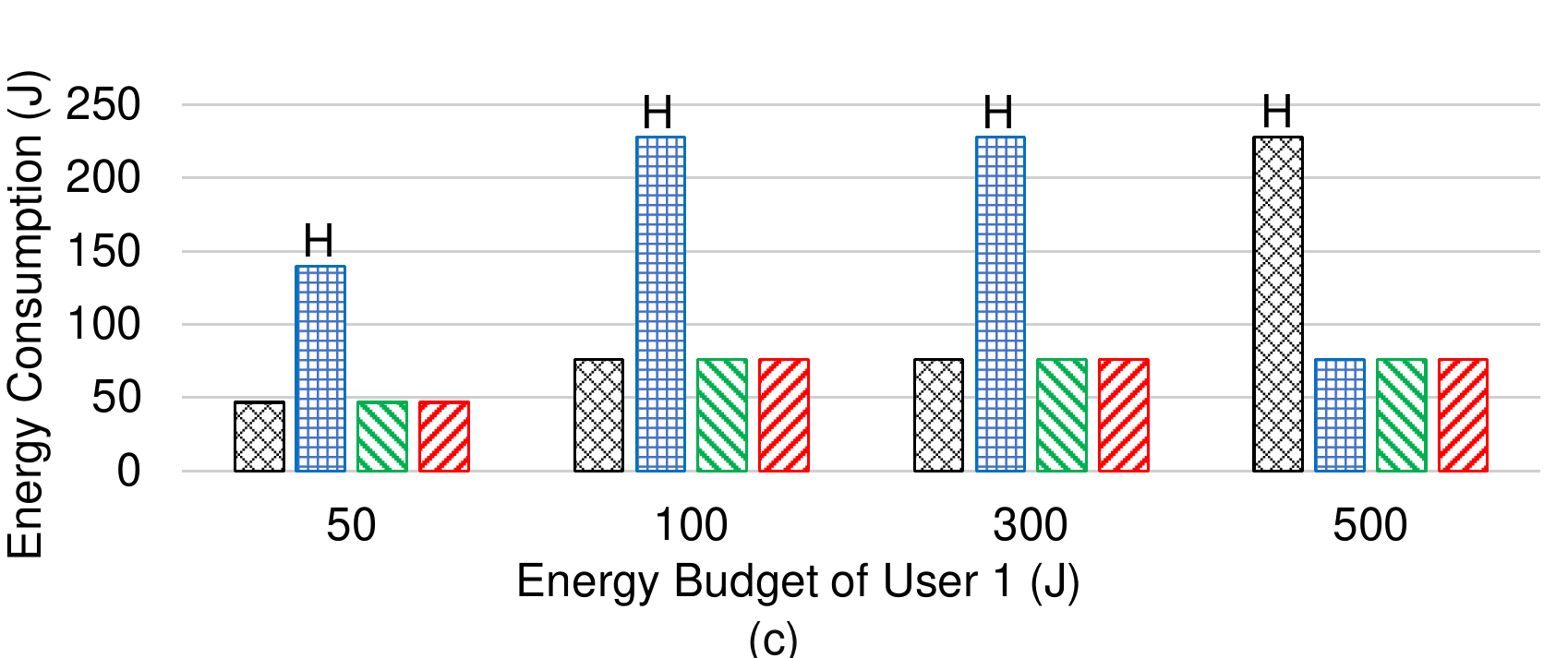}
		\label{fig-eb-energy}}\\[-1ex]
	\caption{Results of airtime allocation and head selection when $\mathbf{\delta} = [1,1,1,1]$.}
	\label{Fig-eb-sens1}
\end{figure}
Fig. \ref{fig-eb-airtime} shows the selected heads and airtime allocated to the dissemination of each user's data with varying budget of user $1$ and $\mathbf{\delta} = [1,1,1,1]$. In the figure, "H" above a bar is short for `head' and implies that the corresponding user is selected as the head. When the budget of user $1$ is $50$ Joules, only $12.25$ s out of $20$ s is allocated. The reason is that the NBS-based approach guarantees fairness in utility. As can be seen from Fig. \ref{fig-eb-utility}, the utilities of all the users are rather close except that the utility of the head is slightly larger than others (the reason will be explained in Section \ref{sec-sim-reward}). Fig. \ref{fig-eb-energy} shows that the budget of user $1$ will be totally utilized when it is $50$ Joules. Therefore, if all the $20$ s is allocated, only the utilities of user $[2, 3, 4]$ will increase while user $1$'s utility will not, which is not fair. However, this does not necessarily mean that the NBS-based approach is inefficient. On the contrary, it provides fairness in utility within each execution, the rest available time will be allocated among user $[2, 3, 4]$ in the next execution and fairness is guaranteed among them.

Fig. \ref{fig-eb-airtime} also shows that the user with the highest budget is always selected as the head, provided that their sensitivities 
are the same. This seems reasonable since the head consumes significantly more energy than others, as shown in Fig \ref{fig-eb-energy}. However, if the sensitivity of user $1$, i.e. $\delta_{1}$, is zero, user $1$ will become the head when its budget is $300$ Joules which is smaller than user $2$'s budget. This is because the cost $\delta_{1}(\frac{1}{E_{1}-e_{1}} - \frac{1}{E_{1}})$ in its utility function is always zero no matter how much energy is consumed. Table \ref{table-eb-ut} shows that, when user $1$ is the head, the utility of user $1$ and that of user $2$ is respectively larger than that of user $2$ and that of user $1$ when user $2$ is the head. As a result, user $1$ being the head has larger generalized Nash product than user $2$ being the head does. However, when user $1$'s budget is $100$ or $50$ Joules, it is not selected as the head simply because its budget does not support to utilize the whole $20$ s and therefore leads to low utility for all the users.

\subsection{Unit Reward}
\label{sec-sim-reward}

Fig. \ref{Fig-eb-sens1} and Table \ref{table-eb-ut} have shown that the selected head can have higher utility than other users. Especially, it holds no matter which user is finally selected as the head in the examples shown in Table \ref{table-eb-ut}. The form of utility function (\ref{utility}) implies that the forwarding reward to the head might be the cause. To study its impact on users' utility, we vary the unit forwarding reward $\gamma$ in $[0,0.02]$. In addition, we let all the users have the same budget $500$ Joules and the same sensitivity $1$. 

Fig. \ref{fig-rw-utility} shows, on one hand, that the utility of the head (user $1$) increases with the unit reward. However, when there is no reward or the unit reward is too small, i.e. in $[0,0.0012]$, the utility of the head is lower than that of other users as shown in the small window in Fig. \ref{fig-rw-utility}. In such cases, user $1$ may not be very willingly to be the head. When the unit reward is high enough, the head can gain a higher utility than other users. On the other hand, the utilities of the peripheral users (user $2, 3, 4$) also increase with the unit reward but slower than that of the head does and stop increasing when the unit reward is $0.013$ and higher. 
Fig. \ref{fig-rw-airtime} illustrates the airtime allocated to the users' disseminations.  It can be seen that higher unit reward motivates the head to allocate more airtime to other users until the `\textit{others-first}' point where all the airtime is allocated to the other users and it get zero airtime for the dissemination of its own data. From that point where $\gamma = 0.013$ onward, increasing unit reward will not change the airtime allocation anymore, and the peripheral users will not be able to disseminate or receive more data. This explains why their utilities keep unchanged when $\gamma = 0.013$ and higher in Fig. \ref{fig-rw-utility}. 
\begin{figure} [t]
	\centering
	\includegraphics[width=0.4\textwidth]{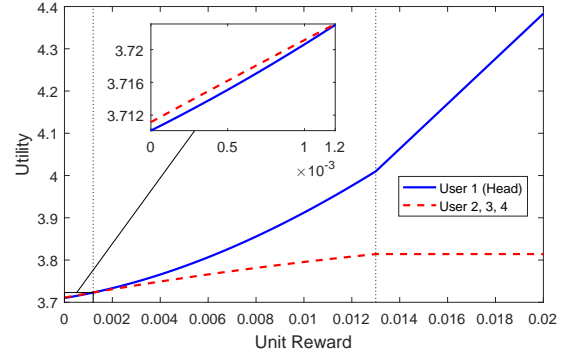}
	\caption{Utilities of users with varying unit reward. }
	\label{fig-rw-utility}
\end{figure}
\begin{figure} [t]
	\centering
	\includegraphics[width=0.4\textwidth]{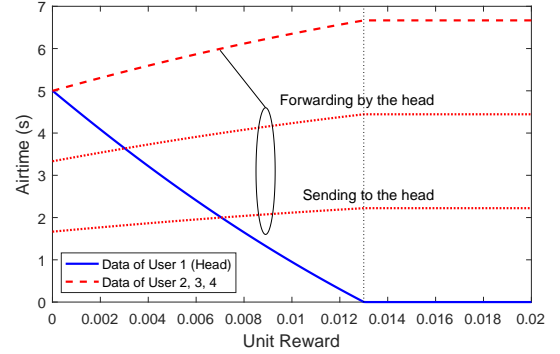}
	\caption{Airtime allocated to each user with varying unit reward. }
	\label{fig-rw-airtime}
\end{figure}

In summary, a higher unit reward certainly motivates the users to be the head. However, a higher unit reward does not necessarily motivate the head to forward more data for others due to the existence of the others-first point. Therefore, a unit reward lying between zero and the others-first point is recommended in real application, which can not only make a trade-off between disseminating the head's data and disseminating the peripheral users' data, but can also control the gap between  the users' utilities.

\subsection{Bargaining Power}
\label{sec-sim-bp}
It is expected that a user's utility increases with its bargaining power. Fig. \ref{fig-bp-utility} shows the utilities of the users with increasing bargaining power of the head. In a loose sense, the expectation is reasonable. However, with a higher unit reward, the head seems less keen on obtaining longer airtime for its own data (as can be seen from Fig. \ref{fig-bp-airtime}), since a higher unit reward already allows it to achieve much higher utility than others. From another point of view, changing the bargaining power is less effective than changing the unit reward to control the gap between the users' utilities. Especially when the unit reward is high (e.g., $\gamma = 0.02$), even the bargaining power of the head is much smaller than that of others, it still have much higher utility than others. Nevertheless, increasing the bargaining power of the head enables the head to obtain a higher utility than other users even when there is no reward for forwarding (i.e. $\gamma = 0$). 

\begin{figure} [t]
	\centering
	\includegraphics[width=0.4\textwidth]{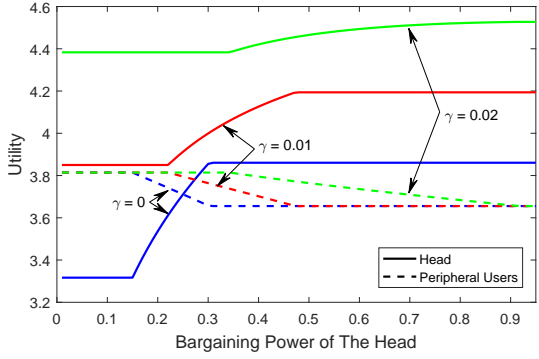}
	\caption{Utilities of users with varying bargaining power. }
	\label{fig-bp-utility}
\end{figure}

\begin{figure} [t]
	\centering
	\includegraphics[width=0.4\textwidth]{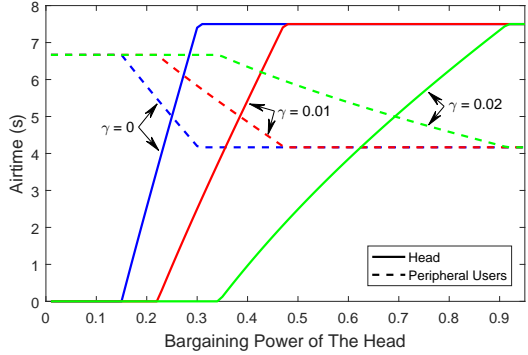}
	\caption{Airtime allocated to each user with varying bargaining power. }
	\label{fig-bp-airtime}
\end{figure}

\subsection{Data Load}
\label{sec-sim-DL}
Larger data load means that longer airtime is required to complete the dissemination. To investigate whether data load affects head selection and airtime allocation, we vary the data load of user $1$ from $2$ MB to $20$ MB and keep others' data load fixed to $10$ MB. Table \ref{table-dl-hd} shows the head selection under three different bargaining power settings. We can see that when the bargaining power of user $1$ is lower than or equal to that of others, the selected head does not change with user $1$'s data load. However, when the bargaining power of user $1$ is very high (i.e., $\frac{10}{13}$), the final head is changed from user $1$ to user $2$ when the data load of user $1$ becomes larger than that of others. To find the reason behind such a change, the airtime allocation with bargaining power $[\frac{10}{13},\frac{1}{13},\frac{1}{13},\frac{1}{13}]$ is plotted in Fig. \ref{fig-dl-airtime}. It can be seen that if user $1$ is the head, the airtime allocated to itself increases linearly with its data load. As a result, the gap between the utility of user $1$ and that of other users increases with user $1$'s data load, as can be seen in Fig. \ref{fig-dl-utility}. Larger utility gap gives smaller generalized Nash product and thus is regarded less fair by generalized NBS. That is why user $2$ is selected as the head by generalized NBS when user $1$ has a larger data load than others. 
\begin{table}[h]
	\caption{Head selection with different data load of user $1$.}
	\label{table-dl-hd}
	\centering
	\tabcolsep=0.19cm
	\begin{tabular}{c|cccccccccc} 
		\hline
		\multirow{2}{*}{Bargaining power}      & \multicolumn{10}{c}{Data load of user $1$ (in MB)} 
		\\\cline{2-11}
		& $2$   & $4$   & $6$   & $8$  & $10$  & $12$   & $14$   & $16$   & $18$  & $20$  \\ \hline
		$[\frac{1}{31},\frac{10}{31},\frac{10}{31},\frac{10}{31}]$         
		& $1$   & $1$   & $1$   & $1$  & $1$   & $1$    & $1$    & $1$    & $1$   & $1$   \\
		$[\frac{1}{4},\frac{1}{4},\frac{1}{4},\frac{1}{4}]$         
		& $1$   & $1$   & $1$   & $1$  & $1$   & $1$    & $1$    & $1$    & $1$   & $1$   \\
		$[\frac{10}{13},\frac{1}{13},\frac{1}{13},\frac{1}{13}]$         
		& $1$   & $1$   & $1$   & $1$  & $1$   & $2$    & $2$    & $2$    & $2$   & $2$   \\\hline
	\end{tabular}
\end{table}
\begin{figure} [H]
	\centering
	\includegraphics[width=0.4\textwidth]{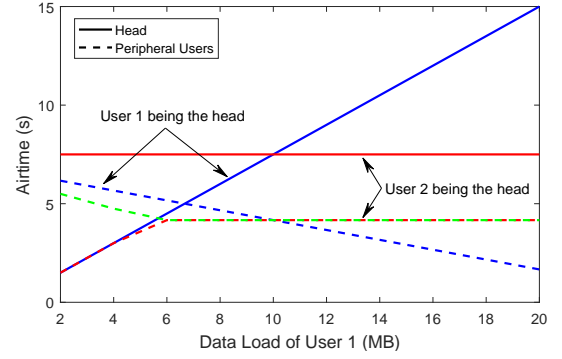}
	\caption{Airtime allocated to each user with varying data load of user 1. The green dotted line with user $2$ being the head shows the airtime allocated to user $3$ and $4$, while the red dotted line is the airtime to user $1$.}
	\label{fig-dl-airtime}
\end{figure}
\begin{figure} [H]
	\centering
	\includegraphics[width=0.4\textwidth]{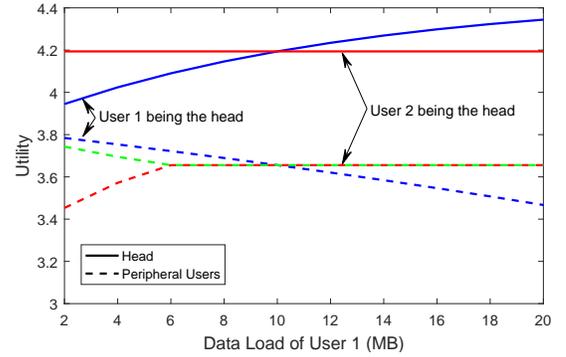}
	\caption{Utilities of the users with varying data load of user 1. The green dotted line with user $2$ being the head shows the utility of user $3$ and $4$, while the red dotted line is the utility of user $1$.}
	\label{fig-dl-utility}
\end{figure}

\subsection{Link Capacity and User Preference}
\label{sec-sim-UP}
In practice, links in the same network may have different capacities and users may have different preferences on the data. In the following, we consider that the links are symmetric (i.e., $c_{ij} = c_{ji}$) and their capacities are shown in Fig. \ref{fig-pr-lc}. Each of users $1$, $2$, and $3$ has one data (i.e., $A1$, $A2$, and $A3$ respectively) to share while user $4$ has two data $A(4,1)$ and $A(4,2)$ to share. We consider four cases of user preference: case 1 -- homogeneous preference, case 2 -- user $1$ is not interested in user $4$'s data $A(4,1)$, case 3 -- user $2$ is not interested in $A(4,1)$, and case 4 -- both user $1$ and $2$ are not interested in $A(4,1)$. 
\begin{figure} [ht]
	\centering
	\includegraphics[width=0.25\textwidth]{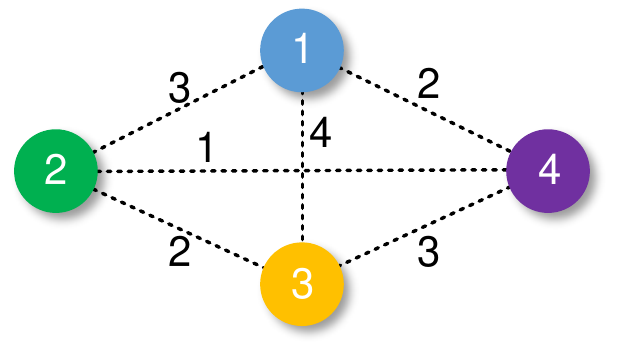}
	\caption{Link capacity (in MB/s).}
	\label{fig-pr-lc}
\end{figure}

\begin{table*}[tb]
	\caption{Head selection and the amount of data disseminated within $T$ for the four cases (In the table, GNP and TAoD are short for generalized Nash product and total amount of data disseminated within $T$ respectively.).}
	\label{table-pr-hd}
	\centering
	\begin{tabular}{c|cc|cc|cc|cc}
		\hline
		\multirow{2}{*}{Candidate head} & \multicolumn{2}{c|}{Case $1$} & \multicolumn{2}{c|}{Case $2$}& \multicolumn{2}{c|}{Case $3$} & \multicolumn{2}{c}{Case $4$}
		\\\cline{2-9}
		& GNP   & TAoD   & GNP   & TAoD  & GNP   & TAoD   & GNP   & TAoD   \\ \hline
		User $1$         
		& $138.7956$   & $55.3846$   & $138.7956$   & $55.3846$  & $138.7956$   & $55.3846$    & $138.7956$    & $55.3846$    \\
		User $2$         
		& $70.5070$   & $32.7273$   & $70.5070$   & $32.7273$  & $70.5070$   & $32.7273$    & $70.5070$    & $32.7273$    \\
		User $3$         
		& $138.7956$   & $55.3846$   & $138.7956$   & $55.3846$  & $150.4951$   & $59.2308$    & $140.7025$    & $56.1538$    \\
		User $4$         
		& $70.5070$   & $32.7273$   & $70.5070$   & $32.7273$  & $84.2542$   & $39.0909$    & $83.2800$    & $37.2727$    \\\hline
		Selected head        
		& User $1$   &    & User $1$   &   & User $3$   &     & User $3$    &     \\\hline
	\end{tabular} \\
\end{table*}

Table \ref{table-pr-hd} shows the results of head selection and the total amount of data disseminated for the four cases. Different from the case of identical link capacity, where the total amount of data disseminated does not change no matter which user is the head, we can see that the selected head for each case is the one that can disseminate the highest amount of data among all four candidate heads. Fig. \ref{Fig-pref-result} shows the results of airtime allocation for the four cases. Before giving the explanation, we define \textit{average dissemination rate} (ADR) of a given data $m$ of user $i$ to be the total amount of this data that has been disseminated to all interested users per second. Mathematically, it can be written by
\begin{equation}\label{avg-diss-rate}
ADR^{m}_{i} = \frac{N^{m}_{i}}{\sum_{(k,j)\in \mathcal{L}^{m}_{i}}\frac{1}{c_{kj}}}
\end{equation}
where $N^{m}_{i}$ is the number of users that are interested in the data. If the ADR of a data is higher, more bits of the data can be disseminated within the same airtime. It can be seen from Fig. \ref{fig-pref-airtime} that in case $1$, where the users have the same preference on all the data, the peripheral users are allocated the same airtime for their data's dissemination. Additionally, the airtime to user $4$ is shared by its two data equally. In comparison, user $4$'s data $A(4,1)$ in case $2$ does not get any airtime for its dissemination. The reason is that its ADR is relatively lower than that of any other data, as shown in Table \ref{table-pr-dissrate}. Therefore, allocating more airtime to other data rather than $A(4,1)$ would contribute more to all users' utilities. For the same reason, $A(4,1)$ in both case $3$ and $4$ is allocated long enough airtime so that it is totally disseminated, as can be seen in Fig. \ref{fig-pref-aod}. 
\begin{figure}[t]
	\centering
	\subfigure{\includegraphics[width=0.4\textwidth]{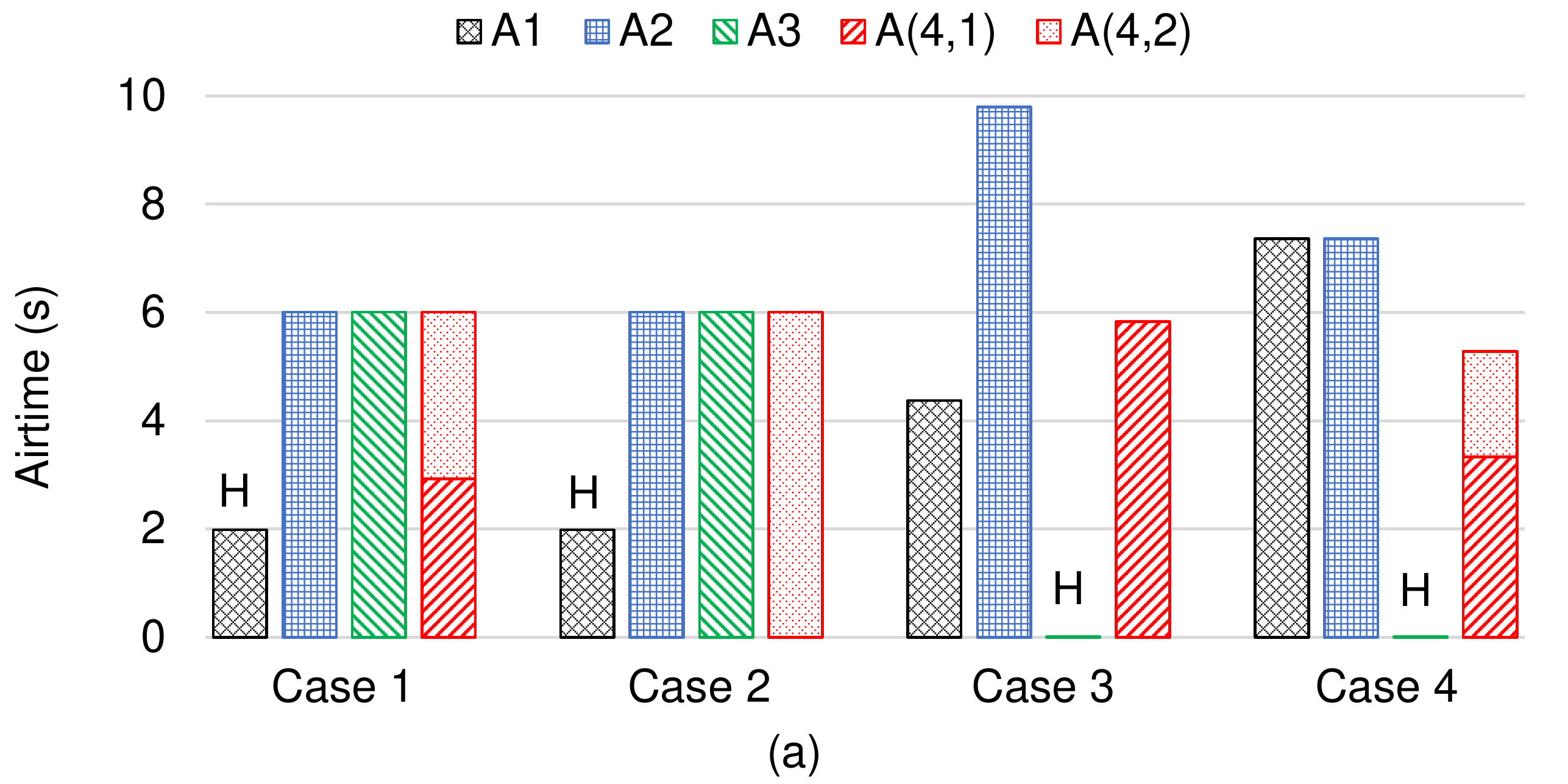}
		\label{fig-pref-airtime}}\\
	\subfigure{\includegraphics[width=0.4\textwidth]{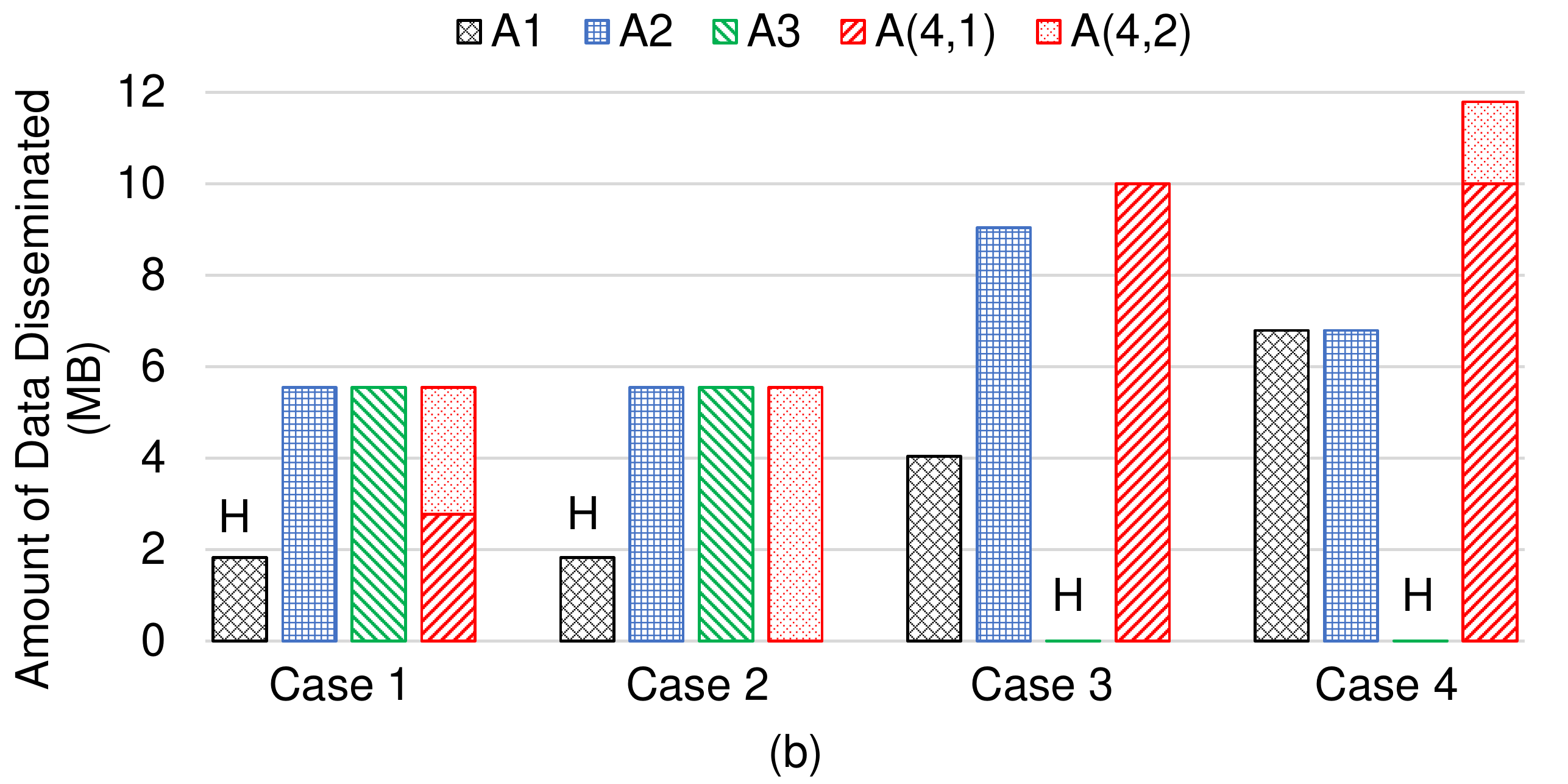}
		\label{fig-pref-aod}}\\[-1ex]
	\caption{Airtime allocation for different user preferences.}
	\label{Fig-pref-result}
\end{figure}

\begin{table}[t]
	\caption{The average dissemination rates of all the data for the four cases.}
	\label{table-pr-dissrate}
	\centering
	\begin{tabular}{c|cccc}
		\hline        
		Data & Case $1$   & Case $2$   & Case $3$   & Case $4$    \\
		\hline
		$A1$         
		& $2.7692$   & $2.7692$   & $2.7692$   & $2.7692$    \\
		$A2$        
		& $2.7692$   & $2.7692$   & $2.7692$   & $2.7692$    \\
		$A3$         
		& $2.7692$   & $2.7692$   & $2.7692$   & $2.7692$     \\
		$A(4,1)$
		& $2.7692$   & $1.8461$   & $3.4285$   & $3$     \\
		$A(4,2)$       
		& $2.7692$   & $2.7692$   & $2.7692$   & $2.7692$   \\\hline
	\end{tabular} \\
\end{table}

\subsection{Adaptive Head Selection and Airtime Allocation} 

The joint head selection and airtime allocation scheme presented in Section \ref{sec_nbs} assumes that the link capacities are constant during the contact. In reality, however, the link capacities may change with time-varying wireless channel condition and node mobility. In this section, we show how the scheme can be easily made adaptive to scenarios with time-varying link capacity and present numerical results for this adaptive scheme. 
\begin{table*}[t]
	\caption{Head selection for the ideal case with different slot sizes for the adaptive scheme.}
	\label{table-ideal-case}
	\centering
	\begin{tabular}{c|cccccccccccccccccccc|cccc} 
		\hline        
		\multirow{2}{*}{Slot size}   & \multicolumn{20}{c|}{\multirow{2}{*}{Head user in each slot of the duration ($20$)}} & \multicolumn{4}{c}{No. of times each user being the head} \\ 
		&&&& &&&& &&&& &&&& &&&&                                                                  & User $1$   & User $2$   & User $3$   & User $4$    \\
		\hline
		$20$      & \multicolumn{20}{c|}{\cellcolor{db}$1$} & $1$   & $-$   & $-$   & $-$ \\ \hline
		$10$      &\multicolumn{10}{c}{\cellcolor{db}$1$} &\multicolumn{10}{c|}{\cellcolor{tomato}$2$} & $1$   & $1$   & $-$   & $-$ \\ \hline
		$4$       &\multicolumn{4}{c}{\cellcolor{db}$1$} &\multicolumn{4}{c}{\cellcolor{tomato}$2$} &\multicolumn{4}{c}{\cellcolor{green}$4$} &\multicolumn{4}{c}{\cellcolor{tomato}$2$} &\multicolumn{4}{c|}{\cellcolor{yellow}$3$} & $1$   & $2$   & $1$   & $1$ \\ \hline
		$2$       &\multicolumn{2}{c}{\cellcolor{db}$1$} &\multicolumn{2}{c}{\cellcolor{tomato}$2$} &\multicolumn{2}{c}{\cellcolor{db}$1$} &\multicolumn{2}{c}{\cellcolor{yellow}$3$} &\multicolumn{2}{c}{\cellcolor{green}$4$} &\multicolumn{2}{c}{\cellcolor{tomato}$2$} &\multicolumn{2}{c}{\cellcolor{yellow}$3$} &\multicolumn{2}{c}{\cellcolor{green}$4$} &\multicolumn{2}{c}{\cellcolor{db}$1$} &\multicolumn{2}{c|}{\cellcolor{tomato}$2$}  & $3$   & $3$   & $2$   & $2$ \\ \hline
		$1$       &\cellcolor{db}$1$ &\cellcolor{tomato}$2$ &\cellcolor{green}$4$ &\cellcolor{tomato}$2$ &\cellcolor{yellow}$3$ &\cellcolor{db}$1$ &\cellcolor{yellow}$3$ &\cellcolor{db}$1$ &\cellcolor{green}$4$ &\cellcolor{tomato}$2$ &\cellcolor{green}$4$ &\cellcolor{yellow}$3$ &\cellcolor{db}$1$ &\cellcolor{tomato}$2$ &\cellcolor{yellow}$3$ &\cellcolor{tomato}$2$ &\cellcolor{green}$4$ &\cellcolor{db}$1$ &\cellcolor{yellow}$3$ &\cellcolor{green}$4$  & $5$   & $5$   & $5$   & $5$ \\
		\hline
	\end{tabular}
\end{table*}

\begin{figure*}[t]
	\centering
	\subfigure[Energy consumption]{\includegraphics[width=0.241\textwidth]{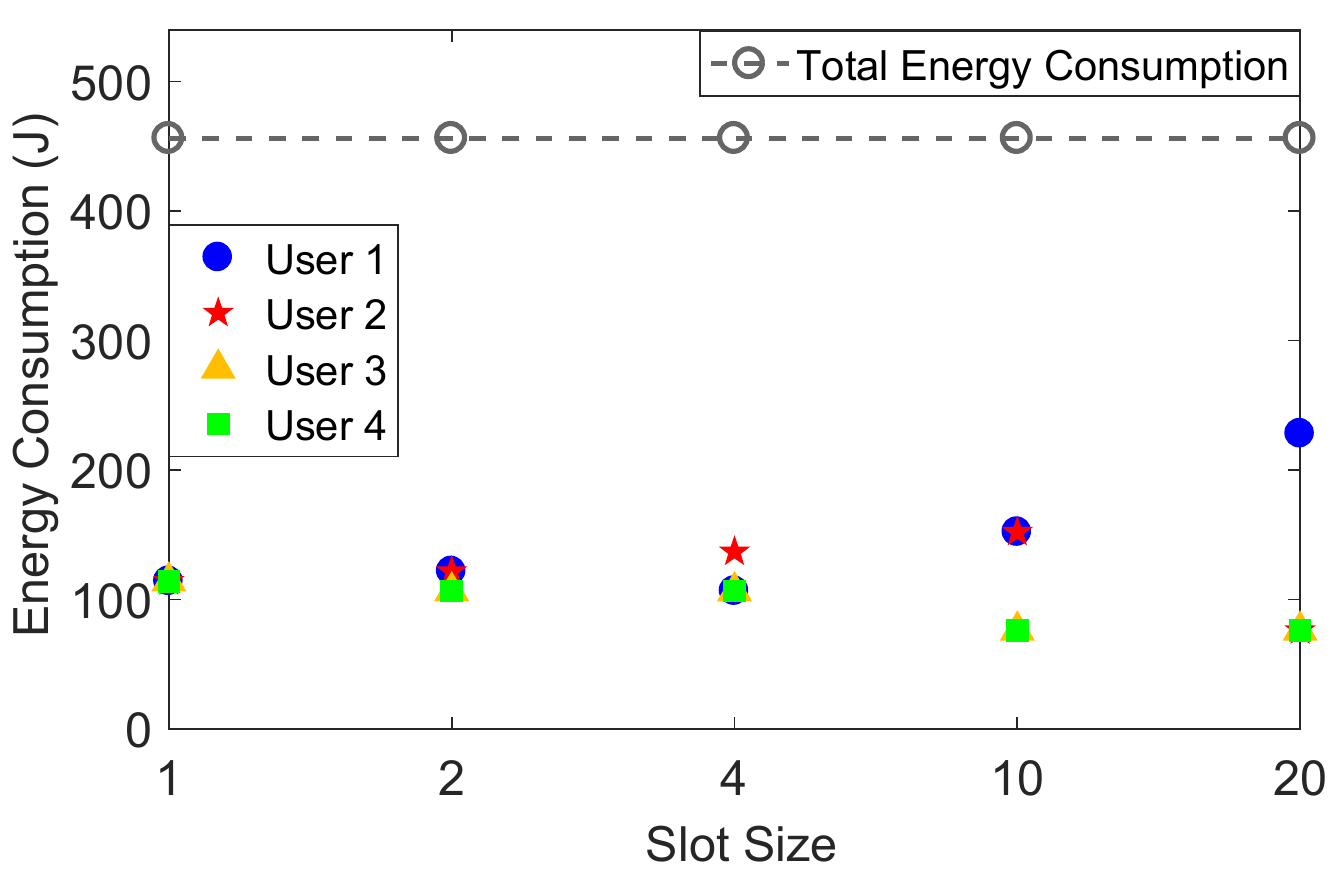} 
		\label{Fig:dy-same-energy}}
	\subfigure[Reward]{\includegraphics[width=0.241\textwidth]{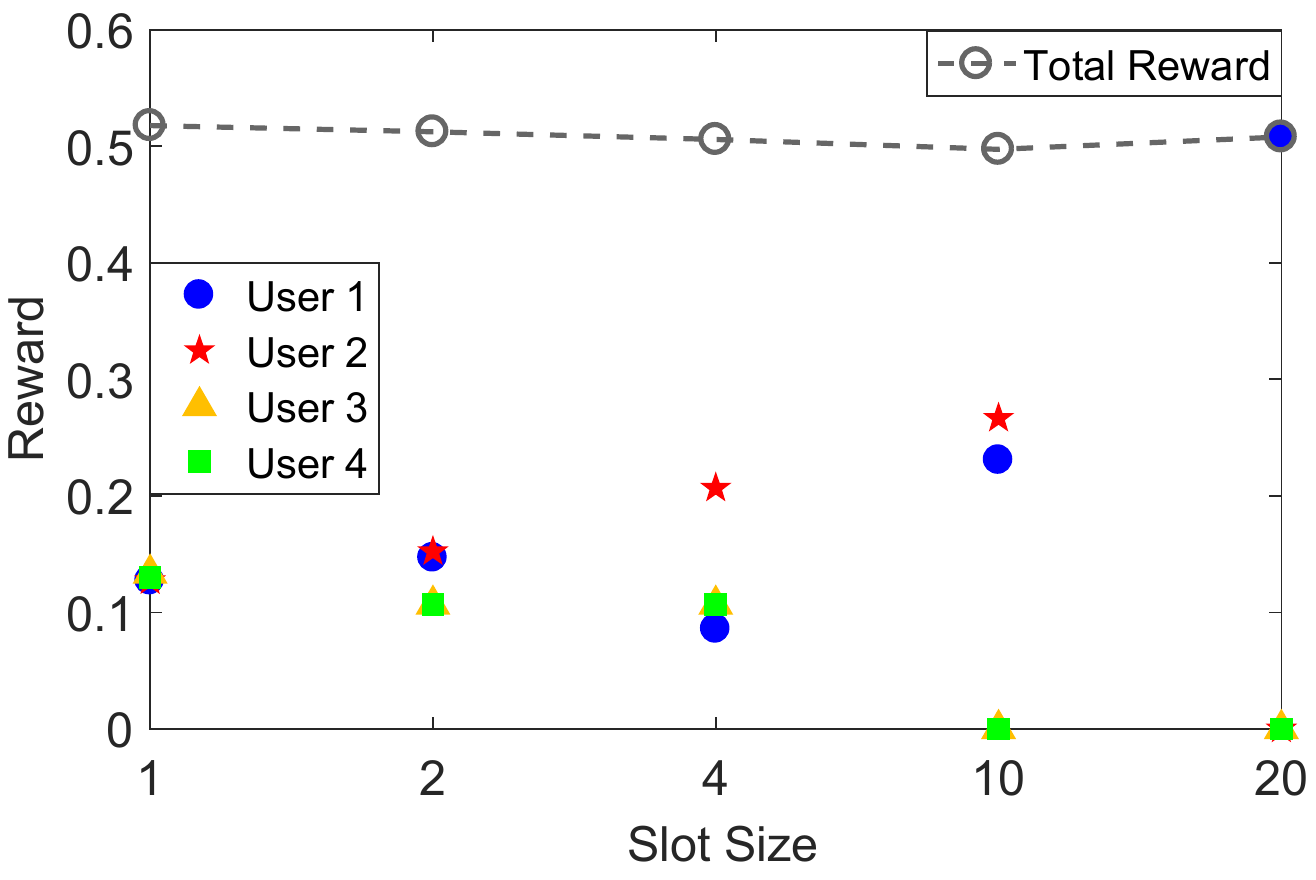}
		\label{Fig:dy-same-reward}}
	\subfigure[AoD]{\includegraphics[width=0.241\textwidth]{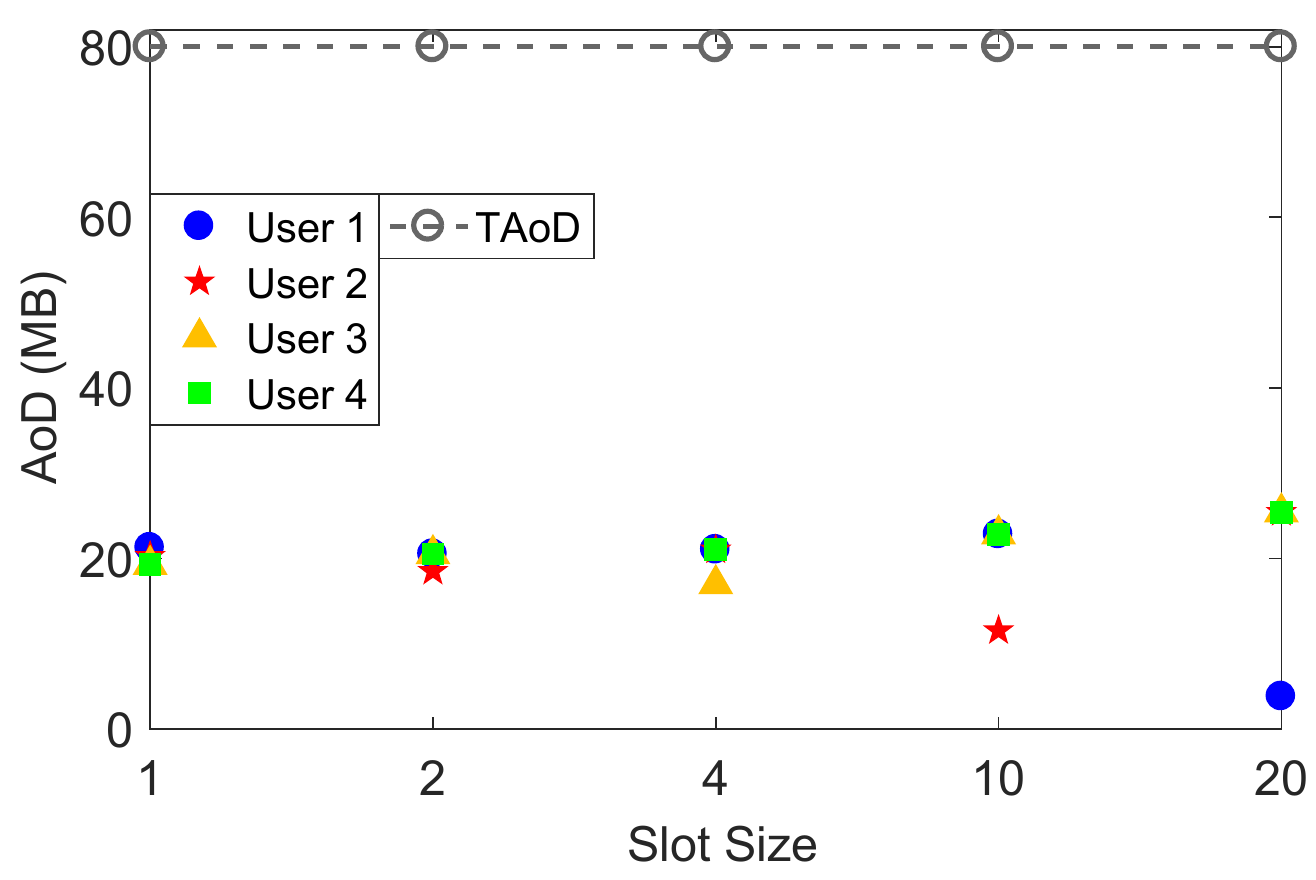}
		\label{Fig:dy-same-aod}}
	\subfigure[Utility and generalized Nash product]{\includegraphics[width=0.241\textwidth]{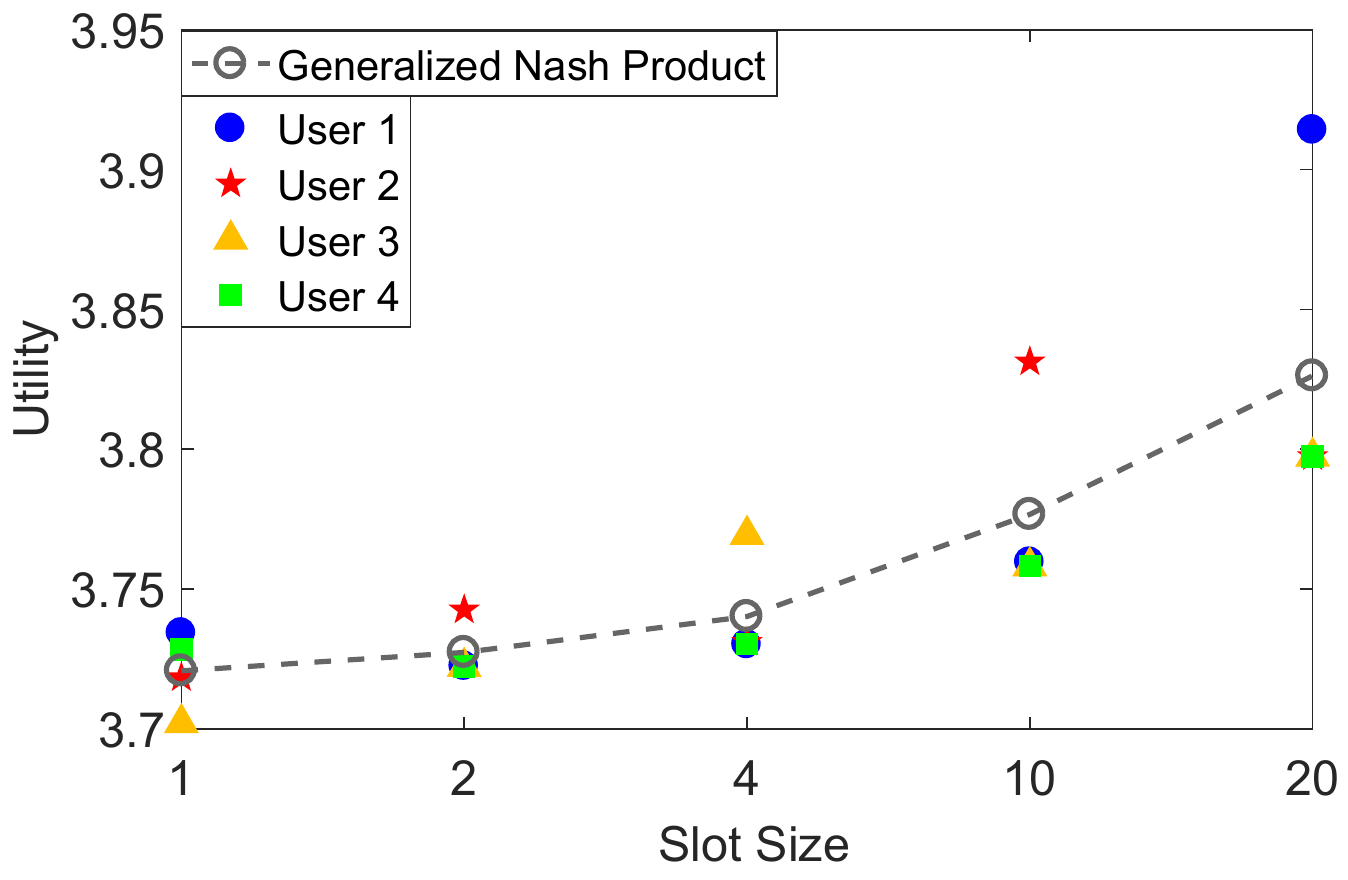}
		\label{Fig:dy-same-utility}}
	\caption{Results for the ideal case with different slot sizes for the adaptive scheme.}
	\label{Fig:ideal-case-result}
\end{figure*}

\begin{figure*}[t]
	\centering
	\subfigure[Energy consumption]{\includegraphics[width=0.2413\textwidth]{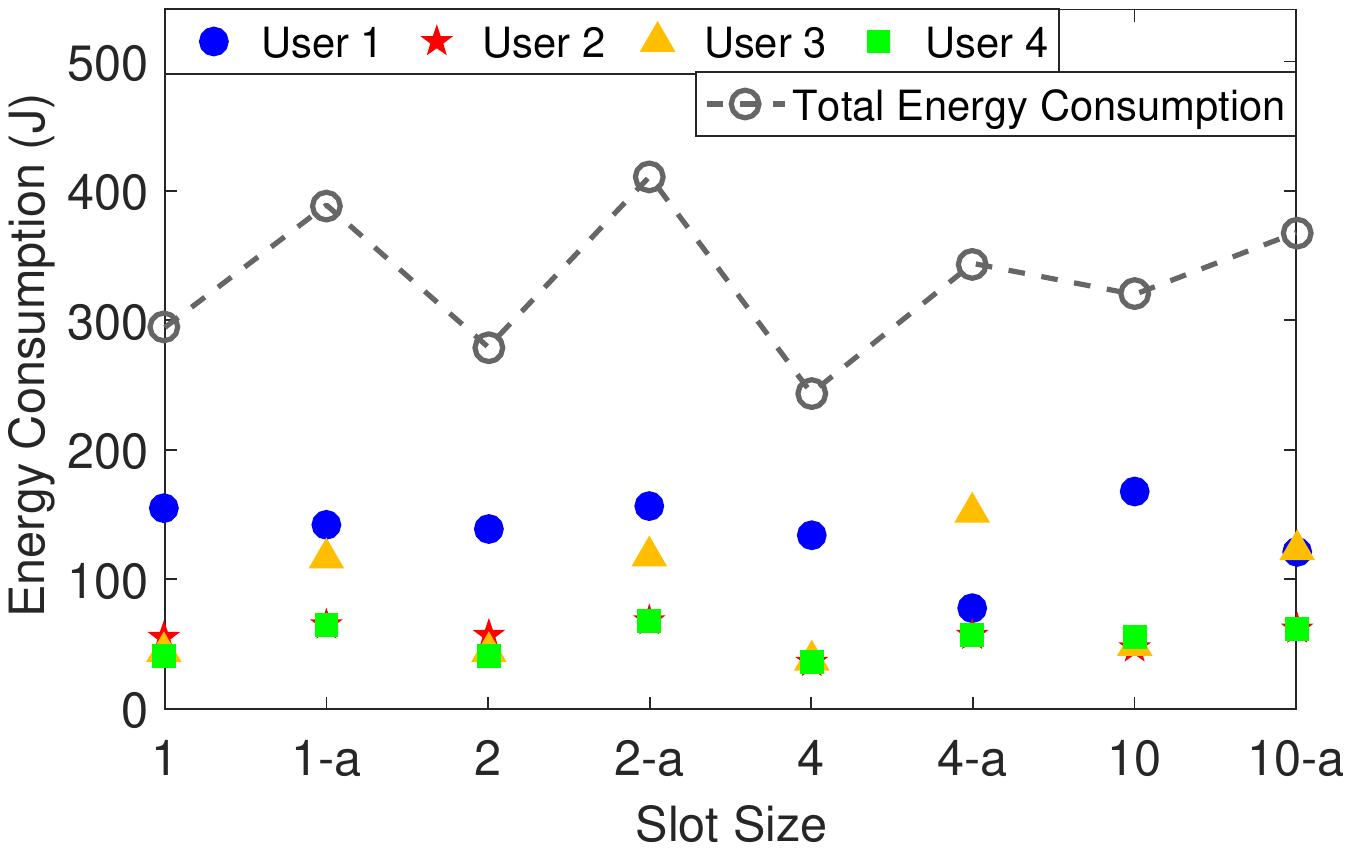} 
		\label{Fig:dy-energy}}
	\subfigure[Reward]{\includegraphics[width=0.2413\textwidth]{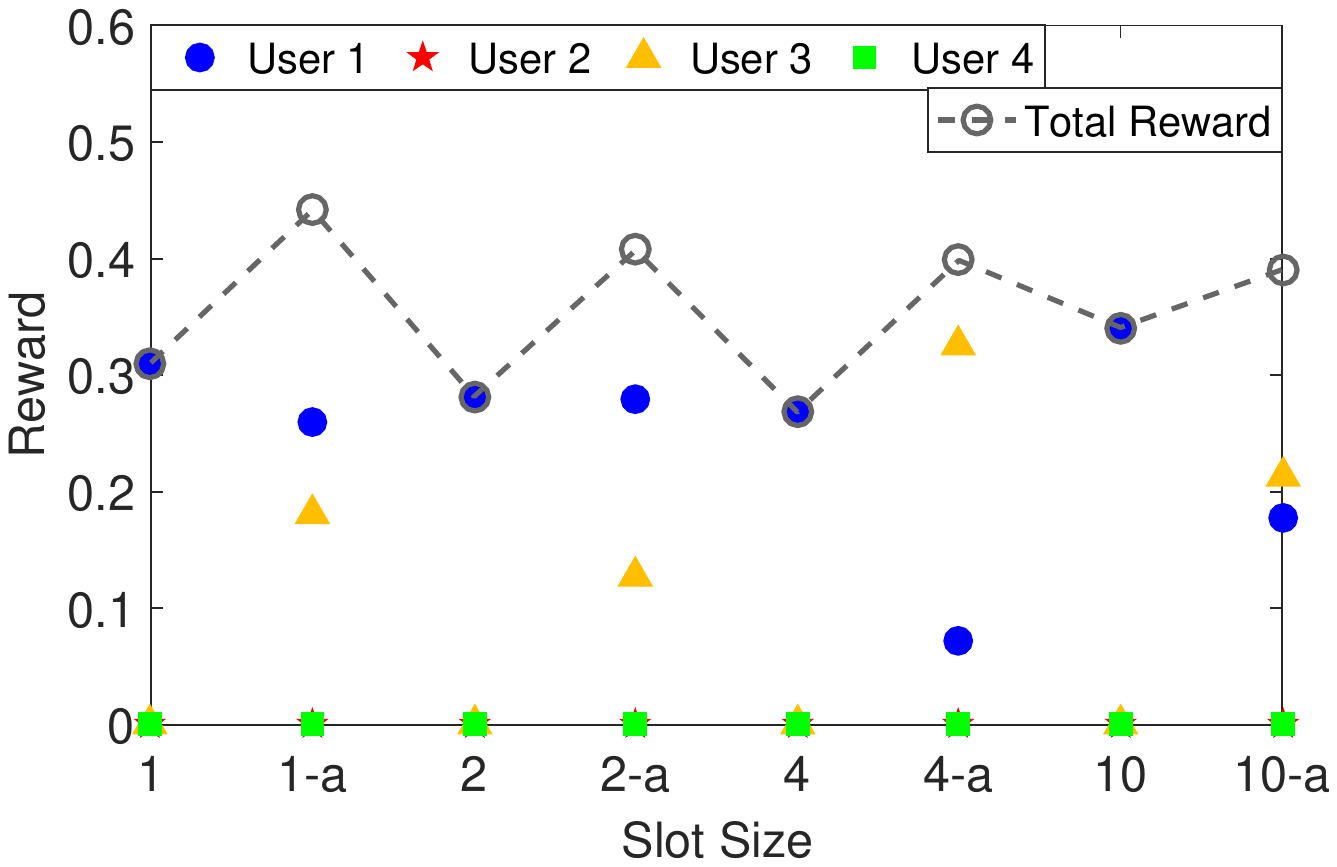}
		\label{Fig:dy-reward}}
	\subfigure[AoD]{\includegraphics[width=0.2413\textwidth]{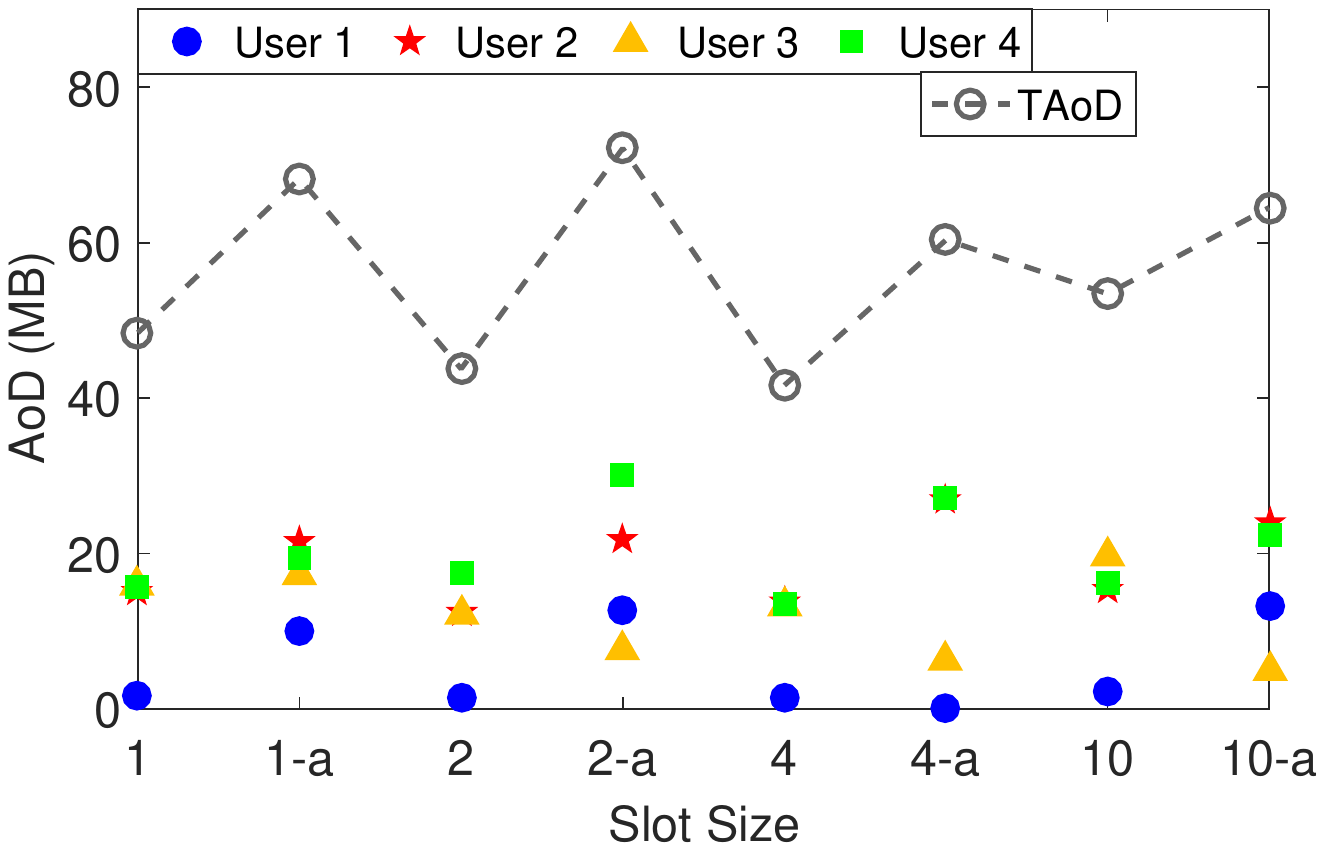}
		\label{Fig:dy-aod}}
	\subfigure[Utility and generalized Nash product]{\includegraphics[width=0.2413\textwidth]{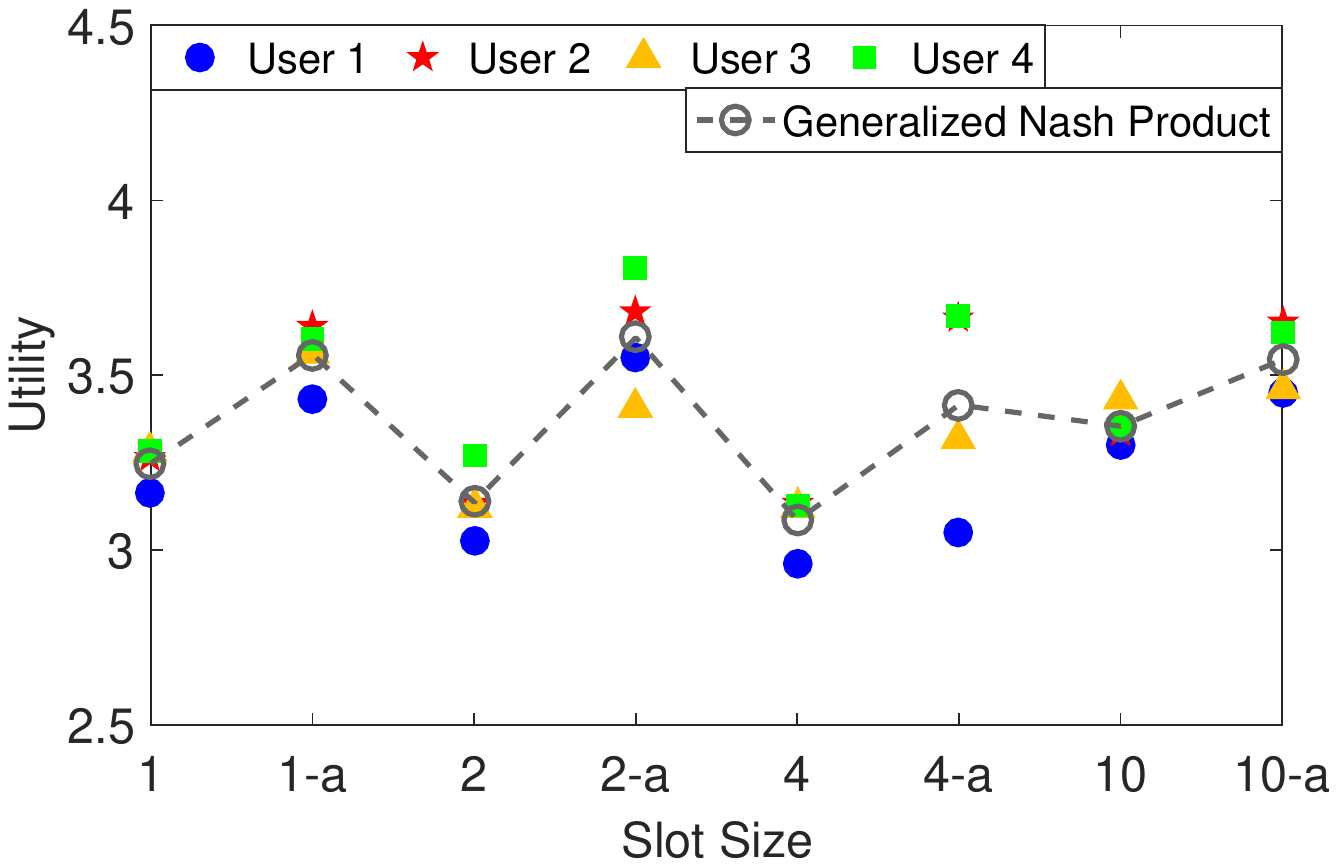}
		\label{Fig:dy-utility}}
	\caption{Comparison between the adaptive scheme and the non-adaptive scheme. Results for the adaptive scheme are marked with "-a" after slot sizes.}
	\label{Fig:rate-result}
\end{figure*}
The adaptive joint head selection and airtime allocation scheme is described as follows. Assume time is slotted with slot size $\pi$ and while the link capacity between two nodes may change over different slots, it remains constant within each slot. At the beginning of each slot $t$, the group of users perform a round of head selection and airtime allocation by solving the following \textit{slot-wise} optimization problem:
\begin{align}\label{full-pref-ad}
\underset{\mathbf{x,a}}{\max}\quad\quad & \textstyle\prod\limits_{i=1}^{N} u_{i}^{t}(\mathbf{x,a})^{\alpha_{i}} \\
\text{s.t.}\quad\quad             & \sum_{i=1}^{N}\sum_{m=1}^{M_{i}} x^{m}_{i} \leq \pi \label{c2-at-ad}\\
& 0 \leq x^{m}_{i} \leq \sum_{(k,j)\in \mathcal{L}^{m}_{i}}\frac{z^{m}_{i}}{c_{kj}}, & ~\forall i \in \mathcal{G}, m \in \mathcal{M}_{i} \\
& u_{i}(\mathbf{x,a}) \geq 0, &  ~\forall i \in \mathcal{G} \\
& e_{i} \leq E_{i}, &  ~\forall i \in \mathcal{G} \\
& a_{i} = \{1,0\}, &  ~\forall i \in \mathcal{G} \\
& \sum_{i=1}^{N} a_{i} = 1. \label{c2-head2-ad}
\end{align}
$u_{i}^{t}(\mathbf{x,a})$ is a slot-wise utility function for the users, which is expressed as
\begin{align} \label{utility-dy}
u_i^{t}(\mathbf{x,a}) = & v\Big(d_{i}^{t}(\mathbf{x})+\sum\limits_{k=1}^{t-1} d_{i}^{k} +b_{i}^{t}(\mathbf{x})+\sum\limits_{k=1}^{t-1} b_{i}^{k}\Big) \\
 - & g\Big(e_{i}^{t}(\mathbf{x,a})+\sum\limits_{k=1}^{t-1} e_{i}^{k}\Big)+ a_{i}\gamma f_{i}^{t}(\mathbf{x})+\sum\limits_{k=1}^{t-1} rw_{i}^{k} \notag
\end{align}
where $d_{i}^{k}$, $b_{i}^{k}$, $e_{i}^{k}$, and $rw_{i}^{k}$ are the amount of data disseminated, the amount of data of interest received, the amount of energy consumed, and the reward gained by user $i$ in the $k$th slot, respectively. The amount of airtime to be allocated is the size of this slot and thus the summation of the airtime to be allocated is no greater than the slot size. 

To study the impact of the slot size on the results of head selection and airtime allocation, we consider an ideal case where energy budget, sensitivity, bargaining power, data load, link capacity and data preference are the same for all the users. From Table \ref{table-ideal-case}, we can see that the users take turns to be the head when the slot size is small. As a result, the energy consumption of the users with smaller slot size are more balanced than that of the users with larger slot size, which can be seen from Fig. \ref{Fig:dy-same-energy}. This result coincides with the idea of periodic clusterhead rotation \cite{heinzelman2000energy,younis2004heed} in the clustering schemes for energy-constrained networks such as wireless sensor networks (WSNs) in order to balance energy usage. In addition to energy consumption, the amount of reward and the amount of data disseminated (AoD) of the users are also quite even when the slot size is small. Though small slot size can balance specific costs and gains such as energy consumption and reward, from Fig. \ref{Fig:dy-same-utility}, it seems that frequent slotting also results in low user utilities and low generalized Nash product.

In the following, we consider that the link capacities may vary over slots. We compare the adaptive scheme and the non-adaptive scheme that selects a head and allocates airtime only at the beginning of the contact regardless of link capacity changes. For each link, we assume it is a Rayleigh fading channel. For such a channel, its capacity $c$ has the following probability density function (PDF) \cite{Hogstad2009Exact}
\begin{equation}\label{rayleigh-capacity-pdf}
p(c) =  \frac{\ln(2)}{\rho}\cdot 2^{c}\cdot \mathrm{e}^{- (2^{c} - 1)/\rho}, \quad c \geq 0
\end{equation}
where $\rho$ represents the signal-to-noise ratio (SNR). 

Fig. \ref{Fig:dy-energy} shows that all users together consume more energy by using the adaptive scheme than using the non-adaptive scheme\footnote{To make the comparison reasonable, we have used a fixed value for $\rho$ here.}. It implies that the adaptive scheme better utilizes the limited airtime by adapting the head selection and airtime allocation to link capacity changes: it allows users to disseminate more data (Fig. \ref{Fig:dy-aod}) and gain more reward (Fig. \ref{Fig:dy-reward}) in total within the contact. More importantly, Fig. \ref{Fig:dy-utility} shows that users can obtain higher utilities in general and thus higher generalized Nash product.

\section{Related Work}
\label{sec_rw} 

Existing protocols on data dissemination in MSNs (e.g. \cite{ioannidis2009optimal,hui2011bubble,wang2013cloud,lu2016towards}) mostly focus on selecting proper data carriers, i.e. whether users should exchange data when they meet, while how data is exchanged after they decide to exchange data is often neglected or simplified. The major reason is that previous studies on MSNs predominantly assume that nodes contact with each other in a pairwise manner. And this assumption makes problem arising in data exchange, such as airtime allocation, seemingly trivial and therefore overlooked by previous studies. However, simultaneous multiple contact is quite common in many cases such as conference, underground, and tourist sites. This viewpoint is supported by a recent study on real-world contact traces \cite{wennerstrom2015considering}. Clearly, group communication among multiple contacting nodes can be more efficient than pairwise communication for content dissemination if multiple users are in contact. And the problem of airtime allocation for content dissemination within a group is nontrivial.

Since the network setting is basically the same, our work is quite related to airtime (or rate) allocation in WLANs and cellular access networks \cite{tan2004time,jiang2005proportional,ramjee2006generalized,li2008proportional}. However, head selection is out of the picture in these studies, because the head, i.e. access point in WLANs or base station in cellular networks, is provided by the service provider as part of the infrastructure. As a result, airtime allocation in these studies is only among users connecting to the AP or BS, while in our work, the head, like other users connecting to it, also competes for a share of the airtime. Another major difference between our work and airtime allocation in WLANs, cellular networks as well as mobile ad hoc networks (MANETs) is that we incorporate user's different preference on data in the user utility. Disseminating a piece of data does not only contribute to the utility of the data disseminator but also contribute to the utilities of all interested receivers. This social property has not been considered in previous studies on airtime allocation in WLANs, cellular networks or mobile ad hoc networks (MANETs). In our earlier work \cite{mao2017contact}, we propose a fair airtime allocation scheme for group data dissemination based on NBS. However, this scheme assumes the group head is already selected, and does not characterize user specifics including data dissemination need, preference on data, energy cost, etc. In this study, we jointly address the problems of head selection and airtime allocation, and incorporate the user preference on data, energy consumption, and forwarding reward into the NBS models.
 
In the literature of multihop wireless networks such as WSNs \cite{abbasi2007survey} and MANETs \cite{yu2005survey}, head selection is often addressed together with cluster formation. The purpose of such joint consideration is to control network topology so that the network nodes can joint force to achieve some specially targeted common objectives such as environment monitoring in an optimized manner (e.g. optimal energy efficiency and coverage). In all such studies, an underlying assumption is that nodes will follow such network-wide coordination to decide their roles and contribute. However, in MSNs, nodes are not intentionally deployed and they can have different interests. In addition, in MSNs, users move freely and decide if to join and what role (as group head or peripheral) to play in a group on their own will. For this reason, we have left cluster formation out of the scope of our work. Nevertheless, since for each cluster or group, head selection and airtime allocation are still necessary processes, our proposed scheme may be adopted after cluster formation has been performed.

\section{Conclusion and Future Work}
\label{sec_con}

In this paper, for local resource management in WSNs, we investigated airtime allocation among users of a group. Taking into consideration the unique characteristic of MSNs that the potential head also has to be counted in the allocation, the decision of airtime allocation has been performed jointly with head selection. To model this joint problem, a game-theoretic approach was proposed, and a Nash bargaining problem was formulated. We proved that, for both considered cases, i.e. a homogeneous case and a more general heterogeneous case, the NBS problem has at least one optimal solution, which ensures an acceptable outcome by all participating users with several properties of the NBS solution, including Pareto optimal and proportionally fair. In addition, we introduced a decomposition-based algorithm to find a unique optimal solution to the allocation problem, which allows being implemented in a distributed fashion. Through the numerical results, we found that the outcome of the head selection and airtime allocation scheme is affected by many parameters. Notably, the scheme prefers user with high energy budget, low sensitivity to energy consumption, and high dissemination rate to be the head. In addition, the reward for forwarding data for others can have significant impact on the allocation. 

We highlight that, in the modeling and analysis of the problem, some specific simple forms of the reward, utility function and energy consumption function have been used. In addition, in modeling the cost incurred to the corresponding user, we have only considered battery energy consumption for simplicity. Nevertheless, the essentials of the airtime allocation and head selection problem in MSNs are mostly revealed. For instance, the problem is so unique that existing optimal or fair airtime allocation schemes for use in classical WLANs and cellular access networks are not applicable or do not hold their initial design properties in MSNs. A future research direction is to consider other cost models for the head and other probably more general reward and utility functions in deciding the allocation. In addition, in this paper, we have considered a practical group topology, the star topology, where all communication has to go through the head that functions likes a personal WiFi hotspot. Another direction for future research is to study airtime allocation for MSNs based on mesh networking. In such a case, the problem is similar to that for ad hoc networks \cite{fang2004fair,lin2004joint,chen2006cross,tychogiorgos2012utility,vergados2018scheduling}, but more complex because of special MSN characteristics such as airtime limitation, incentive / reward involved, and heterogeneous user preference on data.

\ifCLASSOPTIONcaptionsoff
  \newpage
\fi




\bibliographystyle{IEEEtran}
\bibliography{msn-ref}

\end{document}